\newcommand{\ra}[1]{\renewcommand{\arraystretch}{#1}}
\newcommand{\blind}{1}
\newtheorem{proposition}{Proposition}
\newtheorem{theorem}{Theorem}
\newtheorem{definition}{Definition}
\newtheorem{corollary}{Corollary}
\newtheorem{lemma}{Lemma}
\DeclareMathOperator*{\argmin}{\mathrm{argmin}}
\DeclareMathOperator{\betah}{\hat{\beta}}
\DeclareMathOperator{\E}{\mathcal{E}}
\begin{document}

\def\spacingset#1{\renewcommand{\baselinestretch}%
{#1}\small\normalsize} \spacingset{1}

\if1\blind
{
  \title{\bf Within Group Variable Selection through the Exclusive Lasso}
  \author{Frederick Campbell\\
    Department of Statistics, Rice University\\
    and \\
    Genevera Allen \\
    Department of Statistics, Rice University,\\
    Department of Electrical Engineering, Rice University,\\
    Department of Pediatrics-Neurology, Baylor College of Medicine,\\
    Jan and Dan Duncan Neurological Research\\
     Institute, Texas ChildrenÕs Hospital
    }
  \maketitle
} \fi

\if0\blind
{
  \bigskip
  \bigskip
  \bigskip
  \begin{center}
    {\LARGE\bf Within Group Variable Selection through the Exclusive Lasso}
\end{center}
  \medskip
} \fi

\bigskip
\begin{abstract}
Many data sets consist of variables with an inherent group structure. The problem of group selection has been well studied, but in this paper, we seek to do the opposite: our goal is to select at least one variable from each group in the context of predictive regression modeling. This problem is NP-hard, but we propose the tightest convex relaxation: a composite penalty that is a combination of the $\ell_1$ and $\ell_2$ norms. Our so-called Exclusive Lasso method performs structured variable selection by ensuring that at least one variable is selected from each group. We study our method's statistical properties and develop computationally scalable algorithms for fitting the Exclusive Lasso. We study the effectiveness of our method via simulations as well as using NMR spectroscopy data. Here, we use the Exclusive Lasso to select the appropriate chemical shift from a dictionary of possible chemical shifts for each molecule in the biological sample. 
\end{abstract}

\noindent%
{\it Keywords:}  Structured Variable Selection, Composite Penalty, NMR Spectroscopy, Exclusive Lasso
\vfill

\newpage
\spacingset{1.45} 
\section{Introduction}
\label{sec:intro}

In regression problems with a predefined group structure, we seek to accurately predict the response using a subset of variables composed of at least one variable from each predefined group. We can phrase this structured variable selection problem as a constrained optimization problem where we minimize a regression loss function subject to a constraint that ensures sparsity and selects at least one variable from every predefined group. This problem has potential applications in many areas including genetics, chemistry, computer science, and proteomics. Consider a motivating example from finance. In portfolio selection, the variance of the portfolio is just as important as the expected performance of the returns \citep{markowitz1952portfolio}. Suppose we want to select an index fund comprised of a diverse set of 50 stocks whose performance approximates the performance of the S\&P 500.  We can ensure that we are selecting a diversified portfolio by requiring that we select at least one stock from every financial sector; selecting securities from different sectors diversifies the index fund and effectively lowers the variance of the return of our portfolio. We can phrase this strategy as a structured variable selection problem where we minimize the difference in performance between the S\&P 500 and our portfolio subject to selecting a small set of securities that is comprised of at least one security from each predefined financial sector.

Even though this problem is known to be NP-hard, a popular approach in the literature uses convex penalties to relax similar combinatorial problems into tractable convex problems. While the Lasso \citep{tibshirani1996regression} is the most well known of these convex relaxations, there are several frameworks specifically designed to find convex alternatives to complicated structured combinatorial problems \citep{obozinski2012convex,halabi2014totally}. These frameworks lead to convex penalties like the Group Lasso \citep{yuan2006model}, Composite Absolute Penalties \citep{zhao2009composite}, and the Exclusive Lasso \citep{zhou2010exclusive}, the subject of this paper. \citet{zhou2010exclusive} first uses the Exclusive Lasso penalty in the context of multitask learning, and \citet{obozinski2012convex} and \citet{halabi2014totally} relate the penalty to their framework for relaxing combinatorial problems.  The Exclusive Lasso penalty has not yet been explored statistically or developed into a method that can be used for sparse regression and within group variable selection. We will develop the Exclusive Lasso method and study its statistical properties in this paper.

To motivate our statistical investigation of the Exclusive Lasso for sparse regression further, consider the problem of selecting one variable per group using existing techniques such as the Lasso or Marginal Regression. If the Lasso's incoherence condition and beta-min condition are satisfied and Marginal Regression's faithfulness assumption is satisfied, then both methods recover the correct variables with out any knowledge of the group structure \citep{genovese2012comparison,wainwright2009sharp}. However, data rarely satisfies these assumptions. Consider that if two variables are correlated with each other, the Lasso often selects one instead of both variables. When whole groups are correlated, the Lasso may only select variables in one group as opposed to variables across multiple groups. Similarly, if the variables most correlated with the response are in the same group, Marginal Regression will ignore the true variables in other groups.  If we recall the portfolio selection example, we group variables together {\it because} they are correlated. In these situations, the fact that the Lasso and Marginal regression are agnostic to the group structure hurts their ability to select a reasonable set of variables across all predefined groups. If we know that this group structure is inherent to our problem, then complex real world correlated data motivate the development of new structured variable selection methods that directly enforce the desired selection across groups.  

In this paper, we investigate the statistical properties of the Exclusive Lasso for sparse, within group variable selection in regression problems. Specifically, our novel contributions beyond the existing literature \citep{zhou2010exclusive,obozinski2012convex,halabi2014totally} include:  characterizing the Exclusive Lasso solution and relating this solution to the existing statistics literature on penalized regression (Section 2); proving consistency and prediction consistency (Section 3); developing a fast algorithm with convergence guarantees for estimation (Section 4); deriving the degrees of freedom that can be used for model selection (Section 5);  and investigating the empirical performance of our method through simulations (Sections 6 and 7).

\section{The Exclusive Lasso}
\label{sec:exlasso}
Consider the linear model where the response is a linear combination of the variables subject to Gaussian noise: $y = X \beta^* + \epsilon$ where $\epsilon$ is i.i.d Gaussian. For notational convenience, we assume the response is centered to eliminate an intercept term. We assume $\beta^*$ is structured such that its indices are divided into non-overlapping, predefined, groups and that the support of $\beta^*$ is distributed across all groups.  We allow the support set within a group to be as small as one element and as large as the entire group. We can write this as two structural assumptions; (1)  there exists a collection of non-overlapping predefined groups denoted, $\mathcal{G}$, such that $\underset{g\in \mathcal{G}}{\cup}g = \{1,\dots, p\}, \underset{g\in \mathcal{G}}{\cap}=\emptyset$ and (2) the support set $S$ of the true parameter $\beta^*$ is non-empty in each group such that for all $g \in \mathcal{G}$ we have $S \cap g \ne \emptyset$ and $\beta^*_i \ne 0$ for all $i\in S$. Let $C = \{\beta \in \mathbb{R}^p, : \beta_S \neq 0, S\cap g \neq \emptyset, \forall g \in \mathcal{G}\}$ be the set of all parameters that satisfy our structural assumptions.

Our goal is to find the element in $C$ that best represents $y$ using the optimization problem: $\betah = \underset{\beta \in C}{\text{argmin}} \|y-X\beta\|_2^2 $. Our constraint set makes this a combinatorial problem and is generally NP- hard. Instead of considering the problem as stated, we study its convex relaxation by replacing the combinatorial constraint with the convex penalty $P(\beta) = \frac{1}{2}\underset{g \in \mathcal{G}}{\sum} \|\beta_g\|_1^2$ first proposed in the context of document classification and multitask-learning \citep{zhou2010exclusive}. \citet{obozinski2012convex} showed that the Exclusive Lasso penalty is in fact the tightest convex relaxation for the combinatorial constraint requiring the solution to contain exactly one variable from each group.

In this paper we propose to study the Exclusive Lasso penalty in the context of penalized regression, looking at both the constrained version:

\begin{equation}
\betah = \underset{\beta}{\text{argmin }}\frac{1}{2}\|y -X\beta \|_2^2\ \text{ subject to } P(\beta) \le \tau
\end{equation}

\noindent where $\tau$ is some positive constant and its lagrangian
\begin{equation}
\betah = \underset{\beta}{\text{argmin}}\frac{1}{2}\|y -X\beta \|_2^2 + \lambda  \frac{1}{2}\underset{g \in \mathcal{G}}{\sum} \|\beta_g\|_1^2 
\end{equation}

\noindent We predominantly work with the Lagrangian as they are equivalent because it is a convex problem.

Now let us understand the penalty better. For each group $g$, the penalty takes the 1-norm of the parameter vector restricted to the group $g$, $\beta_g$, and then take the 2-norm of the vector of norms. If each element is its own group, the penalty is equivalent to ridge regression. If all elements are in the same group, the penalty is equivalent to squaring the 1-norm penalty. Loosely, the penalty performs selection within group by applying separate lasso penalties to each group. At the group level, the penalty is a ridge penalty preventing entire groups from going to zero. Whatever the case, the group structure informs the type of regularization because it is a composite penalty, utilizing the $\ell_1$ and $\ell_2$ norms within and between groups respectively.

As an illustration, consider the following toy example. Let $\beta^* = (\beta^*_{1,1}, \beta^*_{1,2}, \beta^*_{2,1})$ be our parameter such that the first index denotes group membership and the second denotes the element within group. If we evaluate the penalty at this parameter we have $2P(\beta^*) = (|\beta^*_{1,1}|+|\beta^*_{1,2}|)^2 + (\beta^*_{2,1})^2$. We can visualize this example using the Exclusive Lasso's unit ball as shown in Figure 1. Restricting our attention to variables in the same group $\beta^*_{1,1}, \beta^*_{1,2}$ and setting $\beta^*_{2,1} = 0$ yields a unit ball equivalent to the ball generated by the $\ell_1$ norm. Alternatively, if we restrict our attention to variables in different groups $\beta^*_{1,1}, \beta^*_{2,1}$ and set $\beta^*_{1,2}=0$ the unit ball is equivalent to the ball generated by the $\ell_2$ norm. The geometry of simple convex penalties dictate the structure of the estimate in constrained least squares problems \citep{chandrasekaran2012convex} suggesting that if  the $\ell_1$-norm enforces sparsity in its estimate and that the $\ell_2$-norm enforces density, we can expect the Exclusive Lasso to send either $\beta^*_{1,1}$ or $\beta^*_{1,2}$ to zero while never sending $\beta^*_{2,1}$ to zero.

\begin{figure}[h1]
        \centering
        \begin{subfigure}[b]{0.3\textwidth}
                \includegraphics[width=\textwidth]{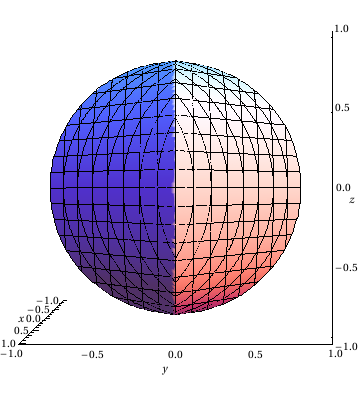}
                \caption{The unit ball is equivalent to the $\ell_2$ ball between groups, enforcing density.}
                \label{fig:b2}
        \end{subfigure}%
        ~ 
        \begin{subfigure}[b]{0.3\textwidth}
                \includegraphics[width=\textwidth]{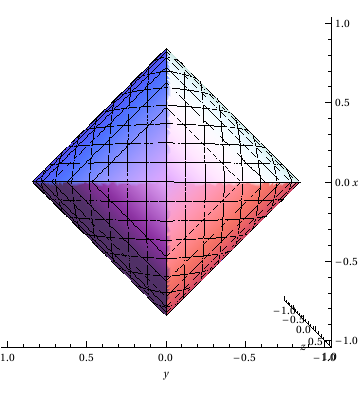}
                \caption{The unit ball is equivalent to the $\ell_1$ ball within group, enforcing sparsity.}
                \label{fig:b1}
        \end{subfigure}
        ~ 
        \begin{subfigure}[b]{0.34\textwidth}
                \includegraphics[width=\textwidth]{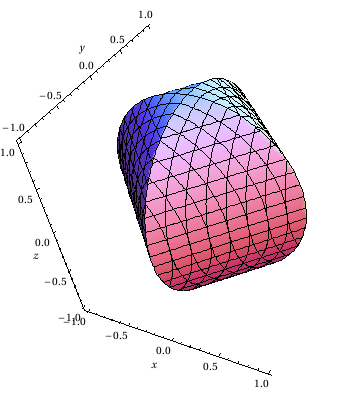}
                \caption{The Exclusive Lasso unit ball.}
                \label{fig:b3}
        \end{subfigure}
        \caption{The unit ball for the Exclusive Lasso penalty. The ball has properties of both the $\ell_1$ unit ball and the the $\ell_2$ unit ball. Let  $\beta^* = (\beta^*_{1,1}, \beta^*_{1,2}, \beta^*_{2,1})$ be a parameter with two groups where the first index denotes the group and the second index enumerates the elements within a group. Considering the perspective where $\beta^*_{2,1}=0$  yields a ball equivalent to the $\ell_1$ ball (b). Considering a perspective where either $\beta^*_{1,1} = 0$ or $\beta^*_{1,2} = 0$ yields a ball equivalent to the $\ell_2$ ball (a).}\label{fig:balls}
\end{figure}

Like the Group Lasso, studied by \citet{yuan2006model}, the Exclusive Lasso assumes the variables have an inherent group structure. However the Group Lasso also assumes that only a small number of groups represent the response $y$. Consequently, the Group Lasso penalty performs selection at the group level sending entire groups to zero. Despite their differences, both the Exclusive Lasso and the Group Lasso are examples of a broader class of penalties studied by \citet{zhao2009composite} called Composite Absolute Penalties. Composite Absolute Penalties employ combinations of $\ell_p$ norms to effectively model a known grouped or hierarchical structure. The first norm is applied to the coefficients in a group. This enforces the desired structure within group. The second norm is applied at the group level to the vector of group norms. This yields the desired structure between groups. In a sense, the Exclusive Lasso is the opposite of the Group Lasso. Where the Exclusive Lasso employs an $\ell_1$-norm within group and an $\ell_2$-norm between groups, the Group Lasso uses an $\ell_2$-norm within group and an $\ell_1$-norm between groups. Several authors have investigated some of the well known composite penalties. \citet{nardi2008asymptotic} study the conditions under which the Group Lasso correctly identifies the correct support. \citet{negahban2008joint} study the theoretical properties of the $\ell_1 / \ell_{\infty}$ norm penalty, a penalty similar to the Group Lasso. Despite the work on other composite penalties, the statistical properties of Exclusive Lasso have not yet been studied.
		
\subsection{Optimality Conditions}
 
 We use the first order optimality conditions to characterize the active set and derive two expressions for the Exclusive Lasso estimate $\betah$. Each of these expressions offers insight into either the behavior of the estimate or its statistical properties.
 
Because problem (1) is convex, an optimal point satisfies $-X^T(y-X\betah) +\lambda z = 0$ where $z$ is an element of the sub gradient such that
  \begin{equation}
  z_i \in 
\partial P(\betah)=
  \begin{cases}
   sign(\betah_i)\|\betah_g\|_1 & \text{if } \betah_i \ne 0 , i\in g \\
   \left[- \|\betah_g\|_1, \  \|\betah_g\|_1\right]       & \text{if } \betah_i= 0 , i \in g
  \end{cases}
\end{equation}

Alternatively, we can express the sub gradient as the product of a matrix and a vector. If we let $M_g  = sign(\betah_{s\cap g})sign(\betah_{s\cap g})^T$ and let $M_S$ be a block diagonal matrix with matrices $M_g$ on the diagonal, then the sub gradient restricted to the support set $S$ of $\betah$ will be $z_S = M_S\betah_S$.

\noindent Note that the matrix $M_S$ depends on the support set as the block diagonal matrices are defined by the nonzero elements of $\betah$ in each group. 

\begin{proposition}

If $S$ is the support set of $\betah$, we can express $\betah$ in terms of the support set:

\begin{equation}
\betah_S = (X_S^TX_S + \lambda M_S )^{\dagger}X^T_Sy  \text{ and } \betah_{S^c} = 0
\end{equation}
\end{proposition}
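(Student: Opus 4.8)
The plan is to read the stated expression straight off the first-order optimality conditions recorded above. Since problem (2) is an unconstrained convex program, a point $\hat\beta$ is optimal if and only if there exists a subgradient $z \in \partial P(\hat\beta)$ with $X^T(y - X\hat\beta) = \lambda z$. I would split this vector identity into the block of coordinates indexed by the support set $S$ of $\hat\beta$ and the block indexed by $S^c$: the $S^c$ block simply re-expresses the selection condition on the inactive variables, while the $S$ block produces the claimed formula.

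For the $S$ block, two observations do all the work. First, $\hat\beta_{S^c} = 0$ by the definition of $S$, so $X\hat\beta = X_S\hat\beta_S$ and the row-$S$ part of $X^T(y - X\hat\beta)$ is exactly $X_S^T(y - X_S\hat\beta_S)$; this also makes $\hat\beta_{S^c} = 0$, half of the claim, immediate. Second, on the support the subgradient is pinned down coordinatewise: for $i \in S\cap g$ we have $z_i = \mathrm{sign}(\hat\beta_i)\|\hat\beta_g\|_1 = \mathrm{sign}(\hat\beta_i)\|\hat\beta_{S\cap g}\|_1$, and a one-line computation identifies the right-hand side with the $i$-th entry of $M_g\hat\beta_{S\cap g}$, so that $z_S = M_S\hat\beta_S$ as already noted in the text. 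Substituting both facts into the $S$ block of $X^T(y - X\hat\beta) = \lambda z$ and rearranging gives the linear system
\begin{equation*}
(X_S^T X_S + \lambda M_S)\,\hat\beta_S = X_S^T y,
\end{equation*}
and solving it yields $\hat\beta_S = (X_S^T X_S + \lambda M_S)^{\dagger} X_S^T y$.

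The one point that warrants care is the passage from the linear system to the closed form involving the Moore--Penrose pseudoinverse. I would note that each $M_g = \mathrm{sign}(\hat\beta_{S\cap g})\mathrm{sign}(\hat\beta_{S\cap g})^T$ is rank-one positive semidefinite, so $A := X_S^T X_S + \lambda M_S$ is symmetric positive semidefinite and the system $A\hat\beta_S = X_S^T y$ is consistent (it is solved by $\hat\beta_S$), which makes the expression $A^{\dagger} X_S^T y$ legitimate. This is literally the unique solution exactly when $A$ is nonsingular --- for instance whenever $X_S$ has full column rank, or more generally whenever $\mathrm{null}(X_S)$ meets $\bigcap_{g}\{\,v : \mathrm{sign}(\hat\beta_{S\cap g})^{T} v_g = 0\,\}$ only at $0$, the regime of primary interest. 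When $A$ is singular the Exclusive Lasso optimum need not be unique even on a fixed support, and the formula should be read as exhibiting one optimizer; it is also worth flagging that the representation is implicit, since both $S$ and $M_S$ depend on the signs of the nonzero entries of $\hat\beta$ --- the natural place to draw the analogy (promised in the text) with active-set characterizations of penalized-regression estimators.

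I expect the only real obstacle to be this pseudoinverse/uniqueness bookkeeping; the remainder is a direct substitution into the subgradient stationarity condition together with the elementary identity $z_S = M_S\hat\beta_S$ verified above.
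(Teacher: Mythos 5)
Your proposal is correct and follows essentially the same route as the paper's own derivation (which appears in the appendix, folded into the proof of Theorem 5): write the KKT stationarity condition $X^T(y-X\hat{\beta})=\lambda z$, restrict to the support, use the identity $z_S = M_S\hat{\beta}_S$, and solve the resulting linear system $(X_S^TX_S+\lambda M_S)\hat{\beta}_S = X_S^Ty$. The only difference is that you are more careful than the paper about the final pseudoinverse step (consistency of the system and when $(X_S^TX_S+\lambda M_S)^{\dagger}X_S^Ty$ literally recovers $\hat{\beta}_S$ rather than merely some solution), a point the paper passes over silently.
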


The matrix $M_S$ distinguishes the Exclusive Lasso from similar estimates like Ridge Regression. It is a block diagonal matrix that is only equivalent to the identity matrix when there is exactly one nonzero variable in each group. At this point, the Exclusive Lasso behaves like a Ridge Regression estimate on the nonzero indices that it has selected. 

Note that this characterization describes the behavior of the nonzero variables but it does not describe the behavior of the entire active set as we vary $\lambda$. To derive a second characterization of $\betah$, we note that the optimality conditions imply that every nonzero variable in the same group has an equal correlation with the residual $X_i^T(y-X\betah)$. This allows us to determine when variables enter and exit the active set. Recall that there is always at least one nonzero variable in each group. Another variable only enters the active set once its correlation with the residual is equal to the correlation shared by the other nonzero variables in the same group. We call the set $\E = \left \{i : \frac{|X_i^T (y - X\betah)|}{\|\betah_g\|_1} = \lambda \right \}$ the ``weighted equicorrelation set" because of its resemblance to the equicorrelation set described in \citet{efron2004least}.

We can use this set to derive an explicit formula for $\betah$. 

\begin{proposition}
If $\E$ is the weighted equicorrelation set, $i$ is in group $g$, and $\gamma'$ is a vector such that $\gamma'_i = \|\betah_g\|_1 - |\betah_i|$ then,

\begin{equation}
\betah_{\E} = (X_{\E}^TX_{\E} + \lambda I )^{-1}[X^T_{\E}y-  \lambda \gamma' s]  \text{ and } \betah_{\E^c} = 0
\end{equation}
where $s \in \{-1,1\}^{|\E|}$ is a vector of signs that satisfies the optimality conditions  and $\E^c$ is the compliment of the set $\E$.
\end{proposition}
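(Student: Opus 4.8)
The plan is to read off the result from the first-order (KKT) condition for the Lagrangian problem (2), namely $-X^T(y-X\betah)+\lambda z = 0$ with $z\in\partial P(\betah)$ as in (3), by rewriting the relevant subgradient entries in a form that is affine in $\betah$. First I would show $S\subseteq\E$, where $S$ is the support of $\betah$: for $i\in S$ lying in group $g$, the first branch of (3) gives $X_i^T(y-X\betah)=\lambda\,\mathrm{sign}(\betah_i)\|\betah_g\|_1$, hence $|X_i^T(y-X\betah)|/\|\betah_g\|_1=\lambda$, so $i\in\E$. Since $S\subseteq\E$ we immediately get $\betah_{\E^c}=0$, which is the second claimed identity, and also $X\betah = X_{\E}\betah_{\E}$.

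Next I would massage the subgradient on $\E$. Set $s_i=\mathrm{sign}(\betah_i)$ when $\betah_i\ne 0$, and $s_i=\mathrm{sign}\!\big(X_i^T(y-X\betah)\big)$ when $i\in\E$ but $\betah_i=0$; in the latter case the definition of $\E$ forces $|X_i^T(y-X\betah)|=\lambda\|\betah_g\|_1$, so $z_i=s_i\|\betah_g\|_1$ lies at the boundary of the interval in the second branch of (3) and is an admissible subgradient value. In both cases $z_i=s_i\|\betah_g\|_1$, and using $\gamma'_i=\|\betah_g\|_1-|\betah_i|$ together with $|\betah_i|=s_i\betah_i$ this becomes $z_i=s_i(|\betah_i|+\gamma'_i)=\betah_i+s_i\gamma'_i$. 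Restricting to coordinates in $\E$, $z_{\E}=\betah_{\E}+\gamma' s$ with the product $\gamma' s$ understood entrywise.

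Then I would substitute back. The restriction of the stationarity condition to the rows indexed by $\E$, combined with $X\betah=X_{\E}\betah_{\E}$, reads $-X_{\E}^Ty+X_{\E}^TX_{\E}\betah_{\E}+\lambda(\betah_{\E}+\gamma' s)=0$, i.e. $(X_{\E}^TX_{\E}+\lambda I)\betah_{\E}=X_{\E}^Ty-\lambda\gamma' s$. Because $\lambda>0$ and $X_{\E}^TX_{\E}\succeq 0$, the matrix $X_{\E}^TX_{\E}+\lambda I$ is positive definite and hence invertible, giving the stated closed form; this also explains why a genuine inverse appears here rather than the pseudoinverse of Proposition 1.

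The only delicate point — the main obstacle — is the bookkeeping around zero coordinates that can belong to $\E$: one must verify that they contribute nothing to the fit (handled by $S\subseteq\E$), that the sign vector $s$ can be chosen so that the single affine identity $z_i=\betah_i+s_i\gamma'_i$ holds simultaneously for active coordinates and for boundary-inactive coordinates in $\E$, and that "a vector of signs that satisfies the optimality conditions" in the statement is precisely this $s$. Everything after that is linear algebra.
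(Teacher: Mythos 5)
Your proof is correct and follows essentially the same route as the paper's: both start from the stationarity condition restricted to $\E$, rewrite the subgradient entrywise as $z_i = \betah_i + s_i\gamma'_i$ using $\|\betah_g\|_1 = \gamma'_i + |\betah_i|$, and invert $X_{\E}^TX_{\E}+\lambda I$. Your additional bookkeeping (verifying $S\subseteq\E$ to justify $\betah_{\E^c}=0$ and choosing $s_i$ for boundary-inactive coordinates) is handled only implicitly in the paper's terser derivation, so it is a welcome tightening rather than a departure.
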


 The expression points to the general behavior of the penalty. For the non-zero indices, the first term is a ridge regression estimate $(X_{\E}^TX_{\E} + \lambda I )^{-1}X^T_{\E}y$. The second term $ (X_{\E}^TX_{\E} + \lambda I )^{-1} \lambda \gamma' s$ adaptively shrinks the variables to zero. In the case where the all groups have exactly one non-zero element the Exclusive Lasso estimate is a ridge regression estimate, ensuring that there is at least one non-zero element in each group.

This characterization also helps us see that our method is not guaranteed to estimate exactly one non-zero element in each group. Selecting exactly one element from each group depends on the response $y$ and the design matrix $X$. We believe that the degree of correlation between the columns of the design matrix impact the probability of selecting greater than one element per group. In comparison to other methods, we recover the correct structure at much higher rates, but it is possible to construct examples that prevent the Exclusive Lasso from estimating the correct structure. See the appendix for more details.

Before proceeding we use a small simulated example to compare the behavior of the Lasso to the behavior of the Exclusive Lasso. We let $y=X\beta^* + \epsilon$ where $\epsilon \sim N(0,1)$. The design matrix $X\in \mathbb{R}^{20 \times 30}$ is multivariate normal with covariance that encourages correlation between groups and within groups. The incoherence condition is not satisfied with $\||X_{S^c}^TX_S(X_S^TX_S)^{-1}|\|_{\infty} =  2.603$. There are five groups and $\beta^*$ is nonzero for one variable in each group. In Figure 2, we show the Exclusive Lasso and Lasso regularization paths for this example. In the figure the solid lines are the truly nonzero variables and each color represents a different group. The Exclusive Lasso sends variables to zero until there is exactly one nonzero variable in each group whereas the Lasso eventually sends all variables to zero. Further, notice that the Lasso does not enforce the proper structure. The first five variables to enter the regularization path only represent three of the five groups. Because of this, the Lasso misses several true variables. The regularization path also highlights the Exclusive Lasso's connection to Ridge Regression; five variables will never go to zero.

\begin{figure}
        \centering
        \begin{subfigure}[b]{0.5\textwidth}
                \includegraphics[width=\textwidth]{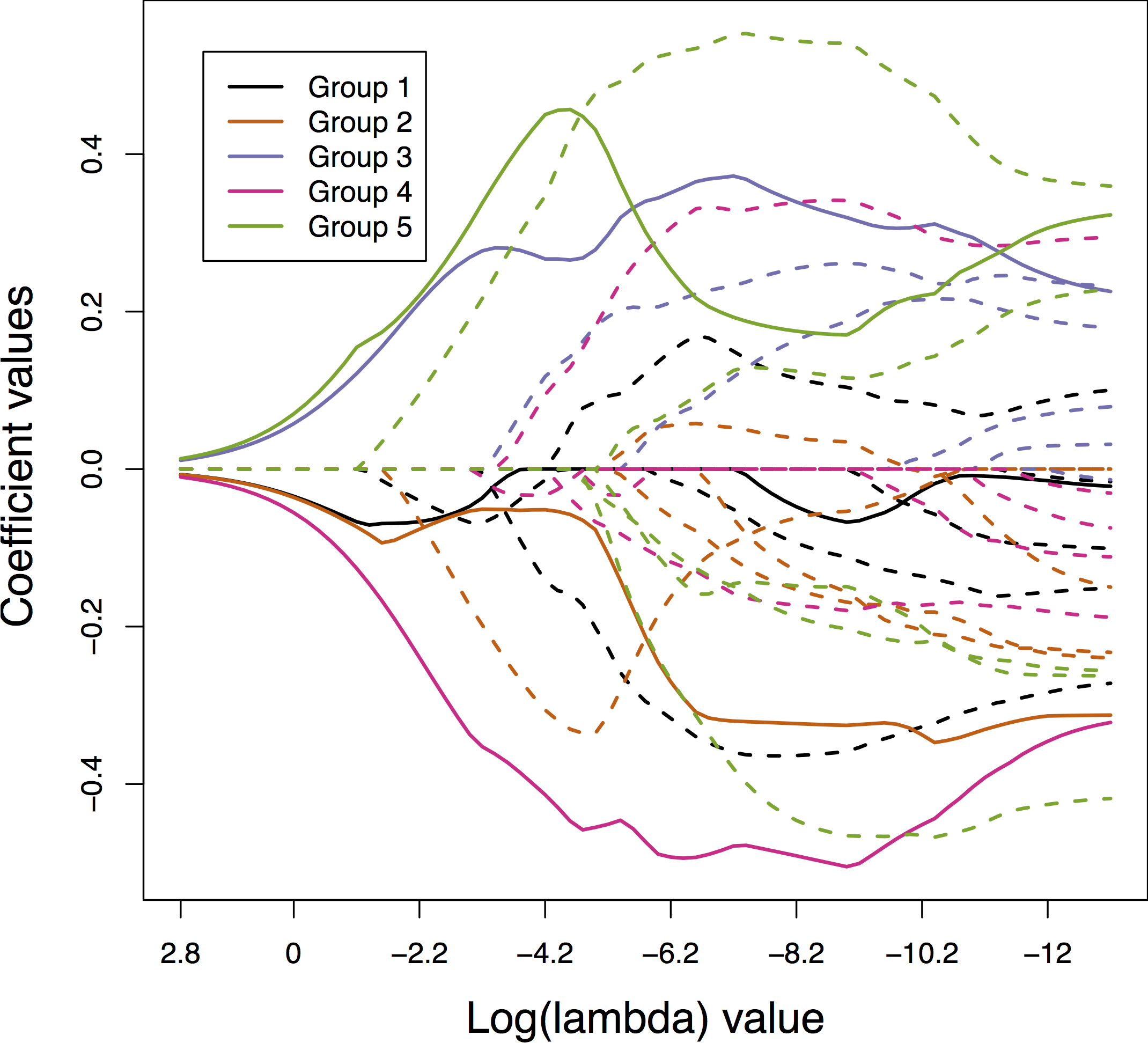}
                \caption{}
                \label{fig:all}
        \end{subfigure}%
        ~ 
        \begin{subfigure}[b]{0.5\textwidth}
                \includegraphics[width=\textwidth]{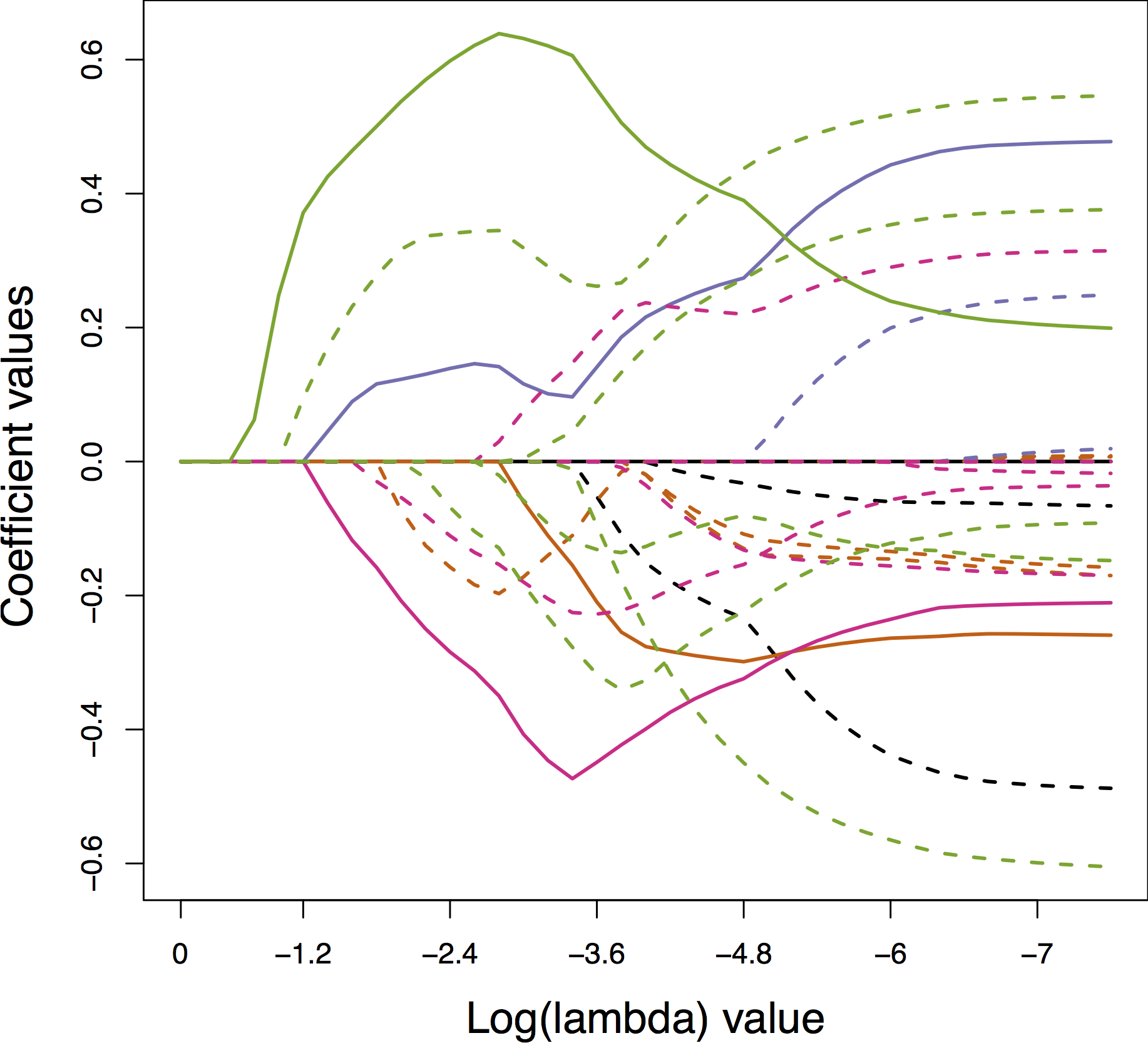}
                \caption{}
                \label{fig:lasso}
        \end{subfigure}
            \caption{A toy simulation with n = 20 and p = 30 consisting of five groups with one true variable per group. The coefficient paths of the true variables are solid and non-true variables are dashed lines. Each color represents a different group. (a) Regularization path for the Exclusive Lasso. The Exclusive Lasso behaves like an adaptively regularized Ridge Regression estimate sending variables to zero until only one variable from each group is nonzero. At this point it behaves like a Ridge Regression estimate. (b) Regularization path for the Lasso. The Lasso sends variables to zero without considering the group structure. Note that the first five variables to enter the model for the Lasso represent only groups 3, 4 and 5, where as the Exclusive Lasso has five variables, at least one from each group, that are in the model for all $\lambda$.}\label{fig:path}
\end{figure}

\section{Statistical Theory}
\label{s:theory}
The Exclusive Lasso is prediction consistent under weak assumptions. These assumptions are relatively easy to satisfy in practice compared to the assumptions typically associated with sparsistency results or consistency in the $\ell_2$-norm.  Throughout the rest of this section, we use the following notation: as before, $X \in \mathbb{R}^{n \times p}$ denotes the design matrix and $\beta^* \in \mathbb{R}^p$ is the true parameter. We let $\mathcal{G}$ be a collection of non overlapping groups such that $\underset{g \in \mathcal{G}}{\cup} = \{1,2,\dots,p\}$ and for all $g,h \in \mathcal{G}$, $g \cap h = \emptyset$. We let $S$ denote the support set of $\beta^*$, meaning that for all $ i \in S$, $\beta^*_i \neq 0$. We denote elements of $X$ as $X_{ij}$ and we index the columns of $X$ by group so that $X_g$ are the columns corresponding to group $g$. Let $Y^* = X\beta^*$ and $\hat{Y} = X\betah$ where the vector $\betah$ is the estimate produced by minimizing squared error loss subject to $P(\betah) \le K$ for some constant $K$.  The population mean squared prediction error is  $\text{MSPE}(\betah) = \mathbb{E}( Y^*- \hat{Y})^2$  and the estimated mean squared prediction error is $\widehat{\text{MSPE}}(\betah) = \frac{1}{n}\overset{n}{\underset{i = 1}{\sum}} ( Y_i^*- \hat{Y}_i)^2$. Note that we can also rewrite them so that MSPE$(\betah) = \mathbb{E}\|\betah - \beta^*|\|_{\Sigma}^2$ and $\widehat{\text{MSPE}}(\betah) = \|\betah - \beta^*\|_{\hat{\Sigma}}^2$ where $\Sigma$ is the covariance matrix of $X$. Later this allows us to compare and bound the $\ell_2$-norm coefficient error by the mean squared prediction error. 
 
In order to prove prediction consistency we need three assumptions: \\

\noindent {\bf Assumption (1):} The data $X$ is generated by a probability distribution such that the columns $\{X_1 \dots X_p\}$ have covariance $\Sigma$ and the entries of $X$ are bounded so that $|X_{ij}| \le M $.\\

\noindent {\bf Assumption (2):} The value of the penalty evaluated at the true parameter is bounded so that $\frac{1}{2}P(\beta^*) \le K $.\\

\noindent {\bf Assumption (3):} The response is generated by the linear model $Y = X\beta^* + \epsilon$ where $\epsilon \stackrel{iid}{\sim} N(0,\sigma^2).$ \\

Using assumptions $(1)-(3)$ we show that the Exclusive Lasso is prediction consistent.

\begin{theorem}
Under assumptions $(1)$, $(2)$ and $(3)$, the population mean squared prediction error of $\betah$ is bounded such that 

\begin{equation}
\text{MSPE}(\betah) \le 2(K+|\mathcal{G}|)M\sigma \sqrt{\frac{2\log(2p)}{n}} + 8(K+|\mathcal{G}|)^2M^2 \sqrt{
\frac{2p\log(2p^2)}{n}}
\end{equation}
 which goes to $0$ as $n \rightarrow \infty$.
\end{theorem}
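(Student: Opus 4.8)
The plan is to run a ``basic inequality'' argument for the constrained estimator and then transfer the resulting empirical prediction-error bound to the population level. Write $\delta = \betah - \beta^*$ and $\hat{\Sigma} = \frac1n X^TX$, and observe that the constraint defining $\betah$ together with Assumption (2) makes $\beta^*$ feasible. Comparing the objective at $\betah$ and at $\beta^*$ gives $\|y - X\betah\|_2^2 \le \|y - X\beta^*\|_2^2$; substituting $y = X\beta^* + \epsilon$ from Assumption (3) and expanding, the cross term with $\|\epsilon\|_2^2$ cancels and this collapses to the basic inequality $\|X\delta\|_2^2 \le 2\epsilon^T X\delta$. Applying H\"older's inequality on the right, $\epsilon^T X\delta \le \|X^T\epsilon\|_\infty \|\delta\|_1$, so that $\widehat{\text{MSPE}}(\betah) = \frac1n\|X\delta\|_2^2 \le \frac2n\|X^T\epsilon\|_\infty\|\delta\|_1$.

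Next I would control the two factors on the right. For the $\ell_1$ term, since both $\betah$ and $\beta^*$ satisfy the penalty bound, Cauchy--Schwarz across the $|\mathcal{G}|$ groups gives $\|\beta\|_1 = \sum_{g}\|\beta_g\|_1 \le \sqrt{|\mathcal{G}|\sum_g\|\beta_g\|_1^2} = \sqrt{2|\mathcal{G}|P(\beta)}$, so the triangle inequality and the constraint yield $\|\delta\|_1 \le \|\betah\|_1 + \|\beta^*\|_1 \le c\sqrt{|\mathcal{G}|K}$ for an absolute constant $c$, which the AM--GM inequality converts into a bound of order $K + |\mathcal{G}|$. For the noise term, conditional on $X$ each coordinate $(X^T\epsilon)_j$ is Gaussian with variance $\sigma^2\|X_j\|_2^2 \le \sigma^2 nM^2$ by Assumption (1), so the maximal inequality for Gaussians gives $\mathbb{E}\|X^T\epsilon\|_\infty \le M\sigma\sqrt{2n\log(2p)}$. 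Taking expectations in the previous display, using that $\|\delta\|_1$ is bounded deterministically, and combining produces $\mathbb{E}[\widehat{\text{MSPE}}(\betah)] \le 2(K+|\mathcal{G}|)M\sigma\sqrt{2\log(2p)/n}$, which is the first term of the claimed bound.

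It remains to pass from $\widehat{\text{MSPE}}$ to $\text{MSPE}$. Using $\text{MSPE}(\betah) = \|\delta\|_\Sigma^2$ and $\widehat{\text{MSPE}}(\betah) = \|\delta\|_{\hat{\Sigma}}^2$, the gap is the quadratic form $\delta^T(\Sigma - \hat{\Sigma})\delta$, which I would bound crudely via Cauchy--Schwarz and then entrywise: $|\delta^T(\Sigma-\hat{\Sigma})\delta| \le \|\delta\|_2\,\|(\Sigma - \hat{\Sigma})\delta\|_2 \le \sqrt{p}\,\|\Sigma - \hat{\Sigma}\|_{\max}\,\|\delta\|_1^2$, where $\|\cdot\|_{\max}$ denotes the entrywise maximum and I used $\|\delta\|_2 \le \|\delta\|_1$ and $\|(\Sigma-\hat{\Sigma})\delta\|_2 \le \sqrt{p}\,\|\Sigma-\hat{\Sigma}\|_{\max}\|\delta\|_1$. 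Each entry $\hat{\Sigma}_{jk}$ is an average of $n$ i.i.d.\ terms bounded by $M^2$, so Hoeffding's inequality makes $\hat{\Sigma}_{jk} - \Sigma_{jk}$ sub-Gaussian with parameter of order $M^4/n$, and the maximal inequality over the $p^2$ entries gives $\mathbb{E}\|\Sigma - \hat{\Sigma}\|_{\max} \le M^2\sqrt{2\log(2p^2)/n}$. Combining this with the order-$(K+|\mathcal{G}|)^2$ bound on $\|\delta\|_1^2$ produces the second term $8(K+|\mathcal{G}|)^2 M^2\sqrt{2p\log(2p^2)/n}$; adding the two pieces and noting each tends to $0$ for fixed $p$ (or $p$ growing sub-polynomially in $n$) completes the proof.

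The main obstacle is the empirical-to-population transfer in the final step: one must choose a quadratic-form estimate crude enough to survive only the entrywise concentration of $\hat{\Sigma}$ --- no eigenvalue control on $X$ is available --- yet structured enough to keep the rate at $\sqrt{p\log p/n}$, and one has to handle the fact that $\betah$ is data-dependent. This is why it is cleanest to keep $\|\delta\|_1$ bounded deterministically through the constraint and to push expectations through the Gaussian and Hoeffding maximal inequalities, rather than arguing on a high-probability event. Carrying the explicit constants ($2$ and $8$) faithfully through the Cauchy--Schwarz and AM--GM steps is the only other point that needs care.
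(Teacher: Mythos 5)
Your overall strategy is the same as the paper's: first bound the empirical prediction error via the optimality of $\hat{\beta}$, H\"older's inequality, the $\ell_1$ bound $\|\delta\|_1 \le 2(K+|\mathcal{G}|)$ for $\delta = \hat{\beta}-\beta^*$ extracted from the penalty constraint, and the Gaussian maximal inequality; then transfer to the population level by writing $\|\delta\|_{\Sigma}^2 - \|\delta\|_{\hat{\Sigma}}^2$ as a quadratic form in $\Sigma - \hat{\Sigma}$ and controlling the entrywise deviation by Hoeffding plus a union bound over the $p^2$ entries. Two of your micro-steps differ from the paper's, with opposite effects on the constants. First, you use the basic inequality $\|y-X\hat{\beta}\|_2^2 \le \|y-X\beta^*\|_2^2$, which yields $\|X\delta\|_2^2 \le 2\,\epsilon^T X\delta$; the paper instead uses the projection (variational-inequality) characterization $\langle y - \hat{Y},\, Y^* - \hat{Y}\rangle \le 0$, which yields $\|X\delta\|_2^2 \le \epsilon^T X\delta$ without the factor $2$. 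As written, your first term therefore comes out as $4(K+|\mathcal{G}|)M\sigma\sqrt{2\log(2p)/n}$ rather than the stated $2(K+|\mathcal{G}|)M\sigma\sqrt{2\log(2p)/n}$; you need the sharper optimality condition to recover the advertised constant. Second, in the transfer step you bound $|\delta^T(\Sigma-\hat{\Sigma})\delta|$ by Cauchy--Schwarz, paying a factor $\sqrt{p}$; the direct entrywise bound $|\delta^T(\Sigma-\hat{\Sigma})\delta| \le \|\Sigma-\hat{\Sigma}\|_{\max}\|\delta\|_1^2$ costs nothing and is what the paper uses --- its appendix in fact ends with $8(K+|\mathcal{G}|)^2M^2\sqrt{2\log(2p^2)/n}$, with no $p$ inside the root, so the $\sqrt{p}$ in the theorem statement appears to be an artifact of the statement rather than something your cruder step is needed to match. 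Neither issue is structural; both are repaired by tightening the two inequalities indicated.
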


Our assumptions are similar to those of the Lasso. Authors have shown that prediction consistency for the Lasso has assumptions that are much easier to satisfy then assumptions for other consistency results like sparsistency \citep{greenshtein2004persistence}. Like the Lasso's prediction consistency assumptions, many data sets will satisfy assumption (1). If we believe the data truely arises from a linear model then assumptions (2) and (3) will be satisfied as well. 

Theorem 1 shows that the Exclusive Lasso is consistent in terms of the norm $\|x\|_{\Sigma}$. The result differs from the prediction consistency result in \citep{chatterjee2013assumptionless} by one term. The group structure in the penalty appears in the bound as the cardinality of the collection of groups. This suggests that we can allow $n$, $p$ and the number of groups to scale together and still ensure that the estimate is prediction consistent. We use this result to justify using the Exclusive Lasso for prediction when a small number of variables are desired in each group. 

We can also bound the estimated mean squared prediction error. 

\begin{theorem}
Under assumptions $(1)$, $(2)$ and $(3)$ the estimated mean squared prediction error of $\betah$ is bounded such that 

\begin{equation}
\mathbb{E}[\widehat{\text{MSPE}}(\betah)] \le 2(K+|\mathcal{G}|)M\sigma \sqrt{\frac{2\log(2p)}{n}} 
\end{equation}
 which goes to $0$ as $n \rightarrow \infty$.

\end{theorem}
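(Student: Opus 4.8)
The plan is to run the classical ``basic inequality'' (slow-rate) argument for a constrained least-squares estimator. Unlike Theorem 1, there is no need to compare the empirical covariance $\hat{\Sigma}$ to the population covariance $\Sigma$: since $\widehat{\text{MSPE}}(\betah)=\|\betah-\beta^*\|_{\hat{\Sigma}}^2=\frac1n\|X(\betah-\beta^*)\|_2^2$ is exactly the quantity the basic inequality controls, the bound here has a single term rather than the two terms of Theorem 1.

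First I would use optimality together with feasibility of $\beta^*$ (Assumption (2)): because $\betah$ minimizes $\|y-X\beta\|_2^2$ over $\{\beta:P(\beta)\le K\}$ and $\beta^*$ lies in that set, $\|y-X\betah\|_2^2\le\|y-X\beta^*\|_2^2$. Substituting $y=X\beta^*+\epsilon$ (Assumption (3)) and expanding both squares, the $\|\epsilon\|_2^2$ terms cancel, and an application of H\"older's inequality gives
\begin{equation}
\|X(\betah-\beta^*)\|_2^2 \;\le\; 2\,\epsilon^{T}X(\betah-\beta^*)\;=\;2\,\langle X^{T}\epsilon,\;\betah-\beta^*\rangle\;\le\;2\,\|X^{T}\epsilon\|_{\infty}\,\|\betah-\beta^*\|_{1}.
\end{equation}
Dividing by $n$ yields $\widehat{\text{MSPE}}(\betah)\le\frac2n\|X^{T}\epsilon\|_{\infty}\,\|\betah-\beta^*\|_{1}$, so it remains to bound the two factors on the right and then take expectations.

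For the $\ell_1$ radius I would use feasibility of both points: $\|\betah-\beta^*\|_1\le\|\betah\|_1+\|\beta^*\|_1$, and for any feasible $\beta$, writing $\|\beta\|_1=\sum_{g\in\mathcal{G}}\|\beta_g\|_1$ and combining $P(\beta)=\frac12\sum_g\|\beta_g\|_1^2\le K$ with Cauchy--Schwarz across the $|\mathcal{G}|$ groups (equivalently, the elementary inequality $t\le\frac12(1+t^2)$ applied to each $\|\beta_g\|_1$) bounds $\|\beta\|_1$ by a constant multiple of $K+|\mathcal{G}|$; this is the step through which the group cardinality enters the rate. For the noise factor I would condition on $X$: each coordinate $X_j^{T}\epsilon$ is $N(0,\sigma^2\|X_j\|_2^2)$, and $|X_{ij}|\le M$ (Assumption (1)) forces $\|X_j\|_2^2\le nM^2$, so $X_j^{T}\epsilon$ is sub-Gaussian with parameter $\sigma M\sqrt{n}$; the Gaussian maximal inequality over the $2p$ signed coordinates then gives $\mathbb{E}\big[\|X^{T}\epsilon\|_{\infty}\mid X\big]\le\sigma M\sqrt{2n\log(2p)}$, and since this bound does not depend on $X$ it survives the outer expectation. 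Multiplying the three bounds and dividing by $n$ produces $\mathbb{E}[\widehat{\text{MSPE}}(\betah)]\le 2(K+|\mathcal{G}|)M\sigma\sqrt{2\log(2p)/n}$, which tends to $0$ as $n\to\infty$.

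The only genuinely delicate step is the $\ell_1$-radius bound: one must keep the dependence on $|\mathcal{G}|$ linear and track the numerical constants through the Cauchy--Schwarz/AM--GM manipulations so that the final constant is exactly $2(K+|\mathcal{G}|)$; everything else is the textbook basic-inequality computation. I would also note, mirroring the remark after the statement, that because $\widehat{\text{MSPE}}(\betah)=\|\betah-\beta^*\|_{\hat{\Sigma}}^2$, this bound immediately yields an in-sample $\ell_2$ coefficient-error bound whenever $\hat{\Sigma}$ admits a positive lower eigenvalue.
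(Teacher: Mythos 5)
Your overall strategy matches the paper's: both arguments reduce $\widehat{\text{MSPE}}(\betah)$ to the product of an $\ell_1$ radius of the feasible set and the expected sup-norm of the correlated noise $X^T\epsilon$, bound the radius by $2(K+|\mathcal{G}|)$ via the inequality $\|\beta_g\|_1 \le \tfrac{1}{2}(1+\|\beta_g\|_1^2)$ summed over groups (a step the paper asserts without detail and you correctly fill in), and bound $\mathbb{E}\,\max_j|X_j^T\epsilon|$ by $M\sigma\sqrt{2n\log(2p)}$ using the Gaussian maximal inequality. The one substantive divergence is at the very first step, and it costs you a factor of $2$ that the stated theorem does not allow. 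You invoke the basic inequality $\|y-X\betah\|_2^2\le\|y-X\beta^*\|_2^2$, which after expansion yields $\|X(\betah-\beta^*)\|_2^2\le 2\,\epsilon^TX(\betah-\beta^*)$. The paper instead uses the variational characterization of $\hat{Y}$ as the Euclidean projection of $Y$ onto the convex set $C=\{X\beta:P(\beta)\le K\}$, namely $\langle Y-\hat{Y},\,x-\hat{Y}\rangle\le 0$ for all $x\in C$; taking $x=Y^*$ gives $\|Y^*-\hat{Y}\|_2^2\le\langle \epsilon,\,\hat{Y}-Y^*\rangle$ with no factor of $2$. Chaining your version through H\"older and the radius bound produces $\mathbb{E}[\widehat{\text{MSPE}}(\betah)]\le 4(K+|\mathcal{G}|)M\sigma\sqrt{2\log(2p)/n}$, twice the claimed constant. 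You cannot absorb this by sharpening the $\ell_1$ radius: the AM--GM route gives $\|\betah-\beta^*\|_1\le 2K+|\mathcal{G}|$ at best, which is not $\le K+|\mathcal{G}|$ for $K>0$. So to obtain the constant exactly as stated you need the projection (first-order optimality) inequality rather than the objective-value comparison; everything after that point in your write-up goes through essentially verbatim, and the qualitative conclusion that the bound vanishes as $n\to\infty$ is unaffected either way.
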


Similar to Theorem 1, the Exclusive Lasso is consistent in terms of the norm $\|x\|_{\hat{\Sigma}}$ under weak assumptions. If we add a further assumption, we can show that the Exclusive Lasso is consistent using the $\ell_2$ norm. 

\begin{corollary}
If the smallest eigenvalue of the covariance matrix $\Sigma$ is bounded below by $c>0$ then the Exclusive Lasso estimate is consistent in the $\ell_2$-norm:
\begin{equation}
\|\betah -\beta^*\|_2^2 \le \frac{2}{c} (K+|\mathcal{G}|)M\sigma \sqrt{\frac{2\log(2p)}{n}} + \frac{8}{c}(K+|\mathcal{G}|)^2M^2 \sqrt{
\frac{2p\log(2p^2)}{n}}
\end{equation}
\end{corollary}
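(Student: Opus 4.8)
The plan is to reduce the $\ell_2$ statement to Theorem 1 by comparing the Euclidean norm with the $\Sigma$-weighted norm $\|\cdot\|_\Sigma$ already used in Section 3. Recall the identity established there, $\text{MSPE}(\betah) = \mathbb{E}\|\betah - \beta^*\|_\Sigma^2$, where $\|v\|_\Sigma^2 = v^T \Sigma v$. Since $\Sigma$ is symmetric and positive semidefinite, the min-eigenvalue (Rayleigh quotient) characterization gives $v^T \Sigma v \ge \lambda_{\min}(\Sigma)\,\|v\|_2^2$ for every $v \in \mathbb{R}^p$; under the added hypothesis $\lambda_{\min}(\Sigma) \ge c > 0$ this becomes the deterministic inequality $\|v\|_\Sigma^2 \ge c\,\|v\|_2^2$.

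First I would apply this bound with $v = \betah - \beta^*$, which holds pointwise for every realization of the data, and rearrange to get $\|\betah - \beta^*\|_2^2 \le \tfrac{1}{c}\|\betah - \beta^*\|_\Sigma^2$. Taking expectations over $(X,\epsilon)$ and invoking the identity above yields
\begin{equation}
\mathbb{E}\|\betah - \beta^*\|_2^2 \;\le\; \frac{1}{c}\,\mathbb{E}\|\betah - \beta^*\|_\Sigma^2 \;=\; \frac{1}{c}\,\text{MSPE}(\betah).
\end{equation}
Finally I would apply Theorem 1 to bound $\text{MSPE}(\betah)$ by $2(K+|\mathcal{G}|)M\sigma\sqrt{2\log(2p)/n} + 8(K+|\mathcal{G}|)^2 M^2 \sqrt{2p\log(2p^2)/n}$, and divide through by $c$; this reproduces exactly the stated inequality, with the left-hand side understood in expectation in keeping with the convention of Theorems 1 and 2.

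There is no substantive obstacle here — the corollary is essentially a one-line consequence of the eigenvalue lower bound. The only points requiring care are bookkeeping ones: invoking the $\Sigma$-norm identity from Section 3 correctly, and noting that the eigenvalue inequality $\|v\|_\Sigma^2 \ge c\|v\|_2^2$ is deterministic and valid for all $v$, so it may be applied inside the expectation before any randomness is integrated out.
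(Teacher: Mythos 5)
Your proposal is correct and follows essentially the same route as the paper's own proof: the paper likewise bounds $\|\betah-\beta^*\|_2^2 \le \frac{1}{c}\|\betah-\beta^*\|_{\Sigma}^2$ via the eigenvalue lower bound and then invokes Theorem 1. Your version is in fact slightly more careful than the paper's, since you make explicit that the eigenvalue inequality is deterministic and that the final bound should be read in expectation, a point the paper's statement glosses over.
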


We add another assumption to establish consistency in the $\ell_2$ norm. This requires the covariance matrix to be strictly positive definite which is much more restrictive then our previous assumptions on $\Sigma$. In general, our results for the Exclusive Lasso are comparable to the consistency results for the Lasso but differ to account for the additional structure in the penalty.

\section{Estimation}
\label{s:inf}

Many types of algorithms exist to fit sparse penalized regression models including coordinate descent, proximal gradient descent, and Alternating Direction Method of Multipliers (ADMM).  We develop our Exclusive Lasso Algorithm based on proximal gradient descent because it is well studied and known to be computationally efficient.  Roughly, this type of algorithm, popularized by \citet{beck2009fast}, proceeds by moving in the negative gradient direction of the smooth loss projected onto the set defined by the non-smooth penalty. These algorithms are easy to implement for simple penalties, because simple penalties typically have closed form proximal operators.

In our case, the proximal operator associated with the Exclusive Lasso penalty is a major challenge as there is no analytical solution. The proximal operator for the Exclusive Lasso is defined as follows:

 \begin{equation} \label{eq:2}
 prox_P(z) = \underset{\beta}{\text{argmin}} \frac{1}{2}\|\beta-z\|^2_2 + \lambda \underset{g}{\sum}\|\beta_g\|_1^2
\end{equation}

\noindent We propose an iterative algorithm to compute the proximal operator of the Exclusive Lasso penalty, prove that this algorithm converges, and prove that the proximal gradient descent algorithm based on this iterative approach converges to the global solution of the Exclusive Lasso problem. 

First, we propose an algorithm to compute the proximal operator.\

\begin{lemma} 
For proximal operator $prox_P(z)$ where $P$ is our Exlcusive Lasso penalty, if $S(z,\lambda) = sign(z)(|z|-\lambda)_+$ and $\beta_g^{-i} = (\beta_{1}^{k+1}, \dots, \beta_{j-1}^{k+1},\beta_{j+1}^k, \dots , \beta_{p}^k)$ then the coordinate wise updates are: 
\begin{equation} \label{eq:3}
\beta_{i,g}^{k+1} = S(\frac{1}{1+\lambda} z_{i,g},\frac{\lambda}{1+\lambda} \|\beta^{-i}_g\|_1).
\end{equation}
\end{lemma}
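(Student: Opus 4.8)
The plan is to derive the coordinate update directly from the optimality conditions of the one–dimensional subproblem obtained by freezing all but one coordinate of $\beta$. First I would record that the proximal objective
\[
h(\beta) = \tfrac{1}{2}\|\beta - z\|_2^2 + \lambda P(\beta), \qquad P(\beta) = \tfrac{1}{2}\sum_{g}\|\beta_g\|_1^2 ,
\]
is strictly convex and coercive, so it has a unique minimizer, and that the nonsmooth term decouples across groups, so it is enough to analyze the minimization within a single group.

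Next, fix a group $g$ and an index $i \in g$, write $t = \beta_{i,g}$, and let $a = \|\beta_g^{-i}\|_1 = \sum_{j\in g,\, j\neq i}|\beta_j| \ge 0$ be the $\ell_1$ norm of the remaining coordinates of the group at their current values. Only the $i$th squared–error term and the group-$g$ penalty term depend on $t$, so up to an additive constant $h$ reduces to
\[
\varphi(t) = \tfrac{1}{2}(t - z_{i,g})^2 + \tfrac{\lambda}{2}\bigl(|t| + a\bigr)^2 .
\]
The key observation — and the one place the structure of $\|\beta_g\|_1^2$ matters — is that expanding $(|t|+a)^2 = t^2 + 2a|t| + a^2$ turns $\varphi(t)$ into $\tfrac{1+\lambda}{2}t^2 - z_{i,g}t + \lambda a|t|$ plus a constant: the coupling with the other coordinates enters only through the term $\lambda a|t|$, which is \emph{linear} in $|t|$, so the subproblem is exactly a shifted soft–thresholding problem. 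Completing the square and dividing through by $1+\lambda$ gives the equivalent problem $\min_t \tfrac{1}{2}\bigl(t - \tfrac{z_{i,g}}{1+\lambda}\bigr)^2 + \tfrac{\lambda a}{1+\lambda}|t|$, whose unique solution is $S\!\left(\tfrac{1}{1+\lambda}z_{i,g},\,\tfrac{\lambda}{1+\lambda}\|\beta_g^{-i}\|_1\right)$; this is precisely the asserted update \eqref{eq:3}. I would justify the soft–threshold formula either by invoking the standard one–dimensional proximal identity or by checking the subgradient condition $0 \in (1+\lambda)t - z_{i,g} + \lambda a\,\partial|t|$ in the three cases $t>0$, $t<0$, $t=0$ and using the positive homogeneity $S(cx,c\kappa) = cS(x,\kappa)$ for $c>0$.

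Finally, since the Gauss–Seidel scheme in the statement simply inserts the most recent values of the other group-$g$ coordinates into $a$, the derivation applies verbatim at each step, which establishes the formula. I expect the only real subtlety to be the observation in the second paragraph: unlike the ordinary lasso, $\|\beta_g\|_1^2$ is not separable across coordinates, so one might fear the per-coordinate subproblem is not solvable in closed form; the proof hinges on noticing that squaring the group $\ell_1$ norm produces only a linear-in-$|t|$ coupling term, leaving each update a soft–threshold. Convergence of the resulting cyclic coordinate descent to $prox_P(z)$ itself is a separate matter, handled by the subsequent results, since the nonsmooth part of $h$ is block–separable.
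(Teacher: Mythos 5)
Your derivation is correct, and it is essentially the only natural proof: freeze the other coordinates, observe that $(|t|+a)^2 = t^2 + 2a|t| + a^2$ contributes a quadratic term plus a term \emph{linear} in $|t|$, complete the square, and read off a soft-threshold. The paper itself states this lemma without proof, so there is nothing to compare against; your argument fills that gap and the subgradient check in the three cases $t>0$, $t<0$, $t=0$ goes through exactly as you describe, yielding $t = \tfrac{1}{1+\lambda}\,\mathrm{sign}(z_{i,g})(|z_{i,g}|-\lambda a)_+ = S\bigl(\tfrac{1}{1+\lambda}z_{i,g},\tfrac{\lambda}{1+\lambda}a\bigr)$. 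One small point worth making explicit: the update in the lemma is consistent only with the normalization $\lambda P(\beta) = \tfrac{\lambda}{2}\sum_g\|\beta_g\|_1^2$ (which you adopted), whereas the paper's displayed proximal problem in equation (\ref{eq:2}) writes the penalty as $\lambda\sum_g\|\beta_g\|_1^2$ without the factor $\tfrac12$; under that normalization the update would instead be $S\bigl(\tfrac{1}{1+2\lambda}z_{i,g},\tfrac{2\lambda}{1+2\lambda}\|\beta_g^{-i}\|_1\bigr)$. You are also right to separate the derivation of the per-coordinate minimizer from the convergence of the cyclic scheme: because $\|\beta_g\|_1^2$ is not coordinate-separable, the latter requires the regularity argument the paper supplies in its proof of Theorem 3.
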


\noindent Notice that each coordinate update depends on the other coordinates in the same group. Because of this, we can implement this in parallel over the groups. At each step, instead of cyclically updating all of the coordinates we update each group in parallel by cyclically updating each coordinate in a group. If there are a large number of groups or the data is very large, this can help speed up the calculation of the proximal operator. This is important in the context of our proximal gradient descent algorithm because the proximal operator is calculated at each step of the proximal gradient descent method.
Empirically, we have observed that coordinate descent is an efficient way to calculate the proximal operator. However, we still need to prove that our algorithm converges to the correct solution.

Note that because our penalty is non-separable in $\beta$, we cannot invoke standard convergence guarantees for coordinate descent schemes without additional investigation.  Nevertheless, we can guarantee our algorithm converges and defer the proof to the appendix:

\begin{theorem}
The coordinate descent algorithm converges to the global minimum of the proximal operator optimization problem given in equation (\ref{eq:2}) .
\end{theorem}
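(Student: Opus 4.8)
The plan is to treat the coordinate updates of the preceding lemma as exact cyclic coordinate minimization on the objective $f(\beta)=\tfrac12\|\beta-z\|_2^2+\lambda\sum_{g\in\mathcal{G}}\|\beta_g\|_1^2$ of problem (\ref{eq:2}), and to prove that this iteration converges to the (unique) global minimizer $\beta^\star$ of $f$. Two structural features drive everything. First, $f$ is $1$-strongly convex, since the quadratic fidelity term has Hessian $\I$ and the penalty is convex; hence $f$ is coercive, has compact sublevel sets, and has a unique minimizer $\beta^\star$. Second, $f$ decouples over the groups in $\mathcal{G}$ (the fidelity term is fully coordinate-separable and the penalty is group-separable), and the update for a coordinate in group $g$ depends only on the other coordinates of $\beta_g$, so it suffices to analyze a single group; we nonetheless keep the notation general.

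First I would record the elementary descent behavior. Each update in (\ref{eq:3}) is the exact minimizer of the one-dimensional convex function $t\mapsto f(\beta+te_i)$, so $f$ is non-increasing along the iteration; consequently every iterate and every partial iterate within a cycle lies in the compact sublevel set $\{f\le f(\beta^0)\}$, and $f(\beta^k)$ decreases to a limit, so $f(\beta^k)-f(\beta^{k+1})\to0$. Because $t\mapsto f(\beta+te_i)$ inherits strong-convexity modulus $1$ from the quadratic term, the decrease produced by updating coordinate $i$ is at least $\tfrac12$ times the squared change in that coordinate; summing over one full cycle and using $f(\beta^k)-f(\beta^{k+1})\to0$ forces every coordinate step length to $0$, so consecutive (partial) iterates coalesce. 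Passing to a convergent subsequence $\beta^{k_m}\to\bar\beta$, all the partial iterates of cycle $k_m$ also converge to $\bar\beta$, and letting $m\to\infty$ in the one-dimensional optimality inequality $f(\beta^{k_m,i})\le f\bigl(\beta^{k_m,i-1}+(t-\beta^{k_m,i-1}_i)e_i\bigr)$, using continuity of $f$, shows that $\bar\beta$ minimizes $f$ along every coordinate direction through $\bar\beta$.

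The crux is then to upgrade this ``coordinatewise minimum'' to a global minimum, which for a generic non-separable nonsmooth convex function fails. Here it succeeds because of a box structure of the subdifferential: within group $g$, $\partial_\beta\|\beta_g\|_1^2=2\|\beta_g\|_1\,\partial\|\beta_g\|_1$, and $\partial\|\cdot\|_1$ is a Cartesian product over coordinates of $\{\mathrm{sign}(\beta_i)\}$ or $[-1,1]$, so scaling by the nonnegative scalar $2\|\beta_g\|_1$ keeps it a product of intervals; moreover the $i$-th interval coincides with the subdifferential of the $i$-th one-dimensional restriction. Adding the separable gradient of $\tfrac12\|\beta-z\|_2^2$ gives $\partial f(\beta)=\prod_i\partial_i f(\beta)$, where $\partial_i f(\beta)$ is the subdifferential of $t\mapsto f(\beta+te_i)$ at $t=0$. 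Hence $0\in\partial_i f(\bar\beta)$ for every $i$ implies $0\in\partial f(\bar\beta)$, so $\bar\beta=\beta^\star$ by convexity. Since the iterate sequence is bounded and each of its cluster points equals the single point $\beta^\star$, the whole sequence converges to $\beta^\star$, i.e.\ to the global minimum of (\ref{eq:2}).

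I expect the main obstacle to be exactly this last implication: for non-separable nonsmooth convex objectives, cyclic coordinate minimization can stall at a non-minimizer (the standard $|x-y|$-type counterexamples), so the real content of the proof is isolating the product/box structure of $\partial\bigl(\sum_g\|\beta_g\|_1^2\bigr)$ that rules this out. A secondary technical point is handling the kinks at $\beta_i=0$ when passing to the limit; this is where strong convexity of each coordinate subproblem — which forces step lengths to zero and makes the limiting inequalities clean — carries the argument. I would also note that, once this regularity property is identified, the conclusion can alternatively be read off from the classical block-coordinate-descent convergence theorem of Tseng.
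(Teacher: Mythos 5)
Your proposal is correct, and it reaches the same mathematical crux as the paper by a somewhat different route. The paper proves the theorem by verifying the four hypotheses of Tseng's block-coordinate-descent theorem (Gateaux differentiability of the smooth part, \emph{regularity} of $f$, compactness of the sublevel set, and pairwise pseudoconvexity) and then cites that theorem; the decisive step is the regularity argument, carried out with one-sided directional derivatives, showing that $f'(x;d_k)\ge 0$ for all coordinate directions implies $f'(x;d)\ge 0$ for all $d$. Your subdifferential identity $\partial f(\beta)=\prod_i\partial_i f(\beta)$ — obtained from $\partial\|\beta_g\|_1^2=2\|\beta_g\|_1\,\partial\|\beta_g\|_1$ and the observation that a nonnegative scalar multiple of a box is still a box whose $i$-th factor matches the one-dimensional restriction — is exactly the same regularity property in subdifferential rather than directional-derivative language, and you are right that this is the real content of the theorem (it is what fails for generic nonseparable nonsmooth convex functions). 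What you do differently is the surrounding convergence machinery: instead of invoking Tseng, you give a self-contained argument exploiting the $1$-strong convexity of the proximal objective, which buys you uniqueness of the minimizer, an immediate bound on coordinate step lengths via the sufficient-decrease inequality, and convergence of the \emph{whole} sequence to $\beta^\star$ rather than just subsequential convergence to a stationary point; the paper's route, by contrast, needs a separate (and somewhat laborious) verification that the sublevel sets are compact and concludes via convexity that the stationary point Tseng guarantees is a global minimum. Both arguments are sound; yours is more elementary and slightly stronger in its conclusion, while the paper's generalizes more readily to settings where the smooth part is not strongly convex.
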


We are now ready to derive a proximal gradient descent algorithm to estimate the Exclusive Lasso using  the coordinate descent algorithm described above.  As the negative gradient of our $\ell_2$ regression loss is $-X^T(y-X\beta)$, our proximal gradient descent update is $\beta^{k+1}  = prox_P(\beta^k - \frac{1}{L}(X^TX\beta^k - X^Ty))$, where $L= \lambda_{\max}(X^TX)$ is the Lipschitz constant for $\|y-X\beta\|_2^2$ (see appendix). Note that this step and Lipschitz constant are the same for all regression problems that use an $\ell_2$-norm loss function. Putting everything together, we give an algorithm outline for our Exclusive Lasso estimation algorithm in Algorithm 1. 

 \begin{algorithm}
\DontPrintSemicolon 
\KwIn{$\beta^0 \in \mathbb{R}^p, \epsilon \in \mathbb{R}, \delta \in \mathbb{R}$}
\KwOut{$\betah \in \mathbb{R}^p$}
\While{ $\|\beta^{k+1}-\beta^k\|> \epsilon$}{
$z_g = \beta^k_g - \frac{1}{L}(X^T_gX\beta^k - X_g^Ty)$\\
In parallel for each $g$: \\
Initialize $\tilde{\beta}_g \in \mathbb{R}^{p_g}$ \\

\While{$\|\tilde{\beta}^{t + 1}_g - \tilde{\beta}_g^t \| > \delta$}{
\For{$i \gets 1$ \textbf{to} $p_g$} {
   $\beta_{g,i}^{t+1} = S(\frac{1}{\lambda+1}z_{g,i} ,\frac{\lambda}{\lambda + 1} \|\tilde{\beta}_g^{-i}\|_1)$
 }
 }
 $\beta_g^{k+1} = \tilde{\beta}_g$
 }
\Return{$\beta$}\;
\caption{{\sc Exclusive Lasso Algorithm} to fit the Exclusive Lasso}
\label{algo:ELA}
\end{algorithm} 

Next, we prove convergence of Algorithm 1. Note that we never calculate the proximal operator exactly. Our coordinate descent algorithm solves the proximal operator optimization problem to within an arbitrarily small error. We need to ensure that the proximal gradient descent algorithm converges despite this sequence of errors $\{\epsilon_k\}$. We can show that as long as the sequence of errors converges to zero, the proximal gradient descent algorithm will converge. 

\begin{theorem}
Given objective function $f(\beta) = \frac{1}{2}\|y - X\beta\| + \lambda P(\beta)$ the sequence of iterates $\{\beta^k\}$ generated by our proximal gradient descent algorithm converges in objective function at a rate of at least $O(1/k)$ when the sequences $\{\|\epsilon_k\|\}$ and $\{\sqrt{\epsilon_k}\}$ are summable.
\end{theorem}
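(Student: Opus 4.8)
The plan is to recognize Theorem 5 as an instance of the convergence theory for \emph{inexact} proximal gradient descent, and to reduce it to a known result (e.g.\ the framework of Schmidt, Le Roux, and Bach, ``Convergence Rates of Inexact Proximal-Gradient Methods for Convex Optimization,'' NIPS 2011) by verifying its hypotheses for our specific $f$. First I would set up the decomposition $f(\beta) = h(\beta) + \lambda P(\beta)$ with $h(\beta) = \frac12\|y - X\beta\|_2^2$ smooth with $\nabla h$ Lipschitz of constant $L = \lambda_{\max}(X^TX)$ (already established in the excerpt), and $\lambda P(\beta)$ convex, proper, lower semicontinuous (indeed finite everywhere and continuous). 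The iteration in Algorithm 1 computes, at each step, an approximate minimizer of the strongly convex proximal subproblem $\tfrac{L}{2}\|\beta - (\beta^k - \tfrac1L\nabla h(\beta^k))\|_2^2 + \lambda P(\beta)$; Theorem 4 guarantees the coordinate descent inner loop converges, so the inner tolerance $\delta$ can be driven down to produce an iterate whose objective value in the proximal subproblem is within $\epsilon_k$ of the true minimum. I would make precise the sense in which ``$\epsilon_k$-approximate proximal step'' is meant — namely an error in the proximal \emph{objective}, which is the notion used in the inexact proximal gradient literature and which, by strong convexity of the subproblem, also controls the distance $\|\beta^{k+1} - \mathrm{prox}_{P/L}(\cdot)\|$ by $\sqrt{2\epsilon_k/L}$ (this is where the hypothesis on $\{\sqrt{\epsilon_k}\}$ enters).

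Next I would quote the inexact proximal gradient master inequality: for the iterates $\{\beta^k\}$ with errors $\{\epsilon_k\}$ in the proximal objective, one has a bound of the form
\begin{equation}
f(\beta^k) - f(\beta^\star) \le \frac{L}{2k}\left(\|\beta^0 - \beta^\star\|_2 + A_k\right)^2,
\end{equation}
where $A_k = \sum_{j=1}^k\left(\sqrt{2\epsilon_j/L} + \sqrt{2\epsilon_j/L}\right)$ is a cumulative error term built from $\sqrt{\epsilon_j}$ (and, in the version that also tracks the $\epsilon_j$ themselves, a term linear in $\sum \epsilon_j$). The minimizer $\beta^\star$ exists because $f$ is coercive: $P$ is a norm-squared-type penalty that is positive definite on $\mathbb{R}^p$ when each group is considered, but more simply $\lambda P(\beta) \to \infty$ as $\|\beta\|\to\infty$, so the sublevel sets are bounded. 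Under the stated summability of $\{\|\epsilon_k\|\}$ and $\{\sqrt{\epsilon_k}\}$, the sequence $A_k$ is bounded above by a finite constant $A_\infty$, hence $f(\beta^k) - f(\beta^\star) \le \frac{L}{2k}(\|\beta^0-\beta^\star\|_2 + A_\infty)^2 = O(1/k)$, which is exactly the claim.

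I would therefore organize the proof as: (i) verify the smooth/nonsmooth splitting and the Lipschitz constant; (ii) translate the inner-loop stopping rule plus Theorem 4 into a genuine $\epsilon_k$-inexact proximal step, and bound the proximal-iterate error via strong convexity; (iii) invoke the inexact proximal gradient recursion to get the accumulated-error inequality; (iv) use coercivity of $f$ to get existence of $\beta^\star$ and boundedness of $\|\beta^0 - \beta^\star\|_2$; (v) use summability of $\{\epsilon_k\}$ and $\{\sqrt{\epsilon_k}\}$ to bound the accumulated error and conclude the $O(1/k)$ rate. The main obstacle, and the part needing the most care, is step (ii): making rigorous that the coordinate-descent inner solver actually delivers an iterate meeting the $\epsilon_k$-in-objective criterion in finitely many steps — this relies on Theorem 4's convergence but must be quantified so that the errors we feed into the outer recursion are controllable, and on spelling out that an error measured in the inner objective value dominates the error in the proximal \emph{point} (up to the $\sqrt{\cdot}$ from $L$-strong convexity) so that the standard inexact proximal gradient bound applies verbatim. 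A secondary technical point is to state cleanly which published inexact-proximal-gradient theorem we are citing and check that our error definition matches theirs; I would either cite it directly or reproduce its short telescoping-plus-Cauchy-Schwarz argument in the appendix.
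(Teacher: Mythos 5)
Your proposal follows essentially the same route as the paper: decompose $f$ into the smooth quadratic loss plus the convex penalty, verify the hypotheses of the inexact proximal-gradient framework of Schmidt, Le Roux, and Bach (Lipschitz gradient with $L=\lambda_{\max}(X^TX)$, lower semicontinuity and convexity of $P$, existence of a minimizer via coercivity of the level sets, and $\epsilon_k$-optimality of the inner iterates from the coordinate-descent convergence result), and then invoke their Proposition~1 to obtain the $O(1/k)$ rate. If anything, your step (ii) --- quantifying how the inner coordinate-descent stopping rule certifies an $\epsilon_k$-optimal proximal step in objective value, and relating that to the proximal-point error via strong convexity --- is treated more carefully than in the paper, which simply asserts that this assumption ``holds by Theorem~3.''
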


Overall, this particular algorithm compares well to ISTA, the proximal gradient descent algorithm for the Lasso \citep{beck2009fast}. Although computing the proximal operator is more complicated due to the structure of the penalty, the convergence rate is the same order as the convergence rate for ISTA. The fact that the iterates are easy to compute and the convergence results are competitive reinforce our empirical observations; despite the additional structure, the Exclusive Lasso Algorithm compares well to first order methods for the Lasso and other penalized regression problems.

\section{Model Selection}
\label{s:bic}

In practice, we need a data-driven method to select $\lambda$ and regulate the amount of sparsity within group. To this end, we provide an estimate of the degrees of freedom that will allow us to use BIC and EBIC approaches for model selection. Note that while other general model selection procedures like cross validation and stability selection can be employed, these do not perform well for the Exclusive Lasso. Like the Lasso, cross validation tends to overselect variables. Similarly, we observe stability selection overselect variables, possibly because the Exclusive Lasso always selects at least one variable per group. If a true variable is not in the model, it is necessary replaced by a false variable leading to artificially high probabilities of inclusion and stability scores for false variables. 

The BIC formula relies on an unbiased estimate for the degrees of freedom for the Exclusive Lasso. We leverage techniques used by \citet{stein1981estimation} and \citet{tibshirani2012degrees} to calculate the degrees of freedom, but defer the proof to the appendix. Our formula leads to an unbiased estimate for the degrees of freedom that we use for both the BIC and the EBIC. Recall that the matrix $M_S$ is a block diagonal matrix where each nonzero block $M_g$ is the outer product of the sign vector of the estimate, $M_g = sign(\betah_{S \cap g})sign(\betah_{S \cap g})^T$. This leads to our statement of the degrees of freedom for $\hat{y}$:

\begin{theorem}
For any design matrix $X$ and regularization parameter $\lambda \ge 0$, if $y$ is normally distributed, then the degrees of freedom for $X\betah$ is $df(\hat{y}) = \mathbb{E}\left[trace(X_S(X_S^TX_S + \lambda M_S)^{\dagger}X^T_S)\right]$.
\end{theorem}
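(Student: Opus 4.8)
The plan is to invoke Stein's unbiased estimator of the degrees of freedom, following the strategy of \citet{stein1981estimation} and \citet{tibshirani2012degrees}. Write $y \sim N(\mu,\sigma^2 I_n)$ and $\hat y(y) = X\betah(y)$. By definition $df(\hat y) = \sigma^{-2}\sum_{i=1}^{n}\mathrm{Cov}(\hat y_i, y_i)$, and when $y \mapsto \hat y(y)$ is continuous and almost differentiable Stein's lemma identifies this with $\mathbb{E}\big[(\nabla \cdot \hat y)(y)\big]$, the expected divergence of $\hat y$ with respect to $y$. So the theorem reduces to two things: (i) checking $y \mapsto \hat y(y)$ is regular enough for Stein's lemma, and (ii) showing that for Lebesgue-almost every $y$ the divergence equals $\mathrm{trace}\big(X_S(X_S^TX_S+\lambda M_S)^{\dagger}X_S^T\big)$, with $S$ and $M_S$ read off from $\betah(y)$.

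For (i): although $\betah$ need not be unique when $X$ is column-rank deficient, the fitted value $\hat y = X\betah$ is unique because the objective in (2) is convex in $\beta$ and strictly convex in $X\beta$. Moreover $\hat y(\cdot)$ is nonexpansive: the optimality conditions give $X^T(y_j - \hat y_j)\in\lambda\,\partial P(\betah_j)$, and monotonicity of $\partial P$ yields $\langle \hat y_1-\hat y_2,\,(y_1-y_2)-(\hat y_1-\hat y_2)\rangle \ge 0$, hence $\|\hat y_1-\hat y_2\|_2 \le \|y_1-y_2\|_2$. A $1$-Lipschitz map is almost differentiable, so Stein's lemma applies.

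For (ii): partition $\mathbb{R}^n$ by the pair (support set $S$, sign vector $s$ on $S$), the region $R_{S,s}$ being those $y$ for which $\betah(y)$ has support $S$, sign pattern $s$, and for which the inequality KKT conditions hold strictly, namely $\betah_i\ne 0$ for $i\in S$ and $|X_i^T(y-X\betah)| < \lambda\|\betah_g\|_1$ for $i\notin\E$. On $R_{S,s}$ the quantities $S$, $\mathrm{sign}(\betah_S)$, hence $M_S$, are constant, so by Proposition 1 $\hat y = X_S(X_S^TX_S+\lambda M_S)^{\dagger}X_S^T\,y$ is a \emph{linear} map of $y$ with constant Jacobian $X_S(X_S^TX_S+\lambda M_S)^{\dagger}X_S^T$; its divergence is the trace of this matrix, which is precisely the asserted integrand. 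There are finitely many regions $R_{S,s}$, so once the complement of $\bigcup_{S,s} R_{S,s}$ is shown to have Lebesgue measure zero, integrating the divergence against the Gaussian density and applying Stein's lemma completes the argument.

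The main obstacle is this measure-zero claim. The complement of $\bigcup_{S,s}R_{S,s}$ consists of $y$ at which some coefficient crosses zero, some variable enters or leaves the weighted equicorrelation set $\E$, or (because of the pseudoinverse) $X_S^TX_S+\lambda M_S$ drops rank. Following \citet{tibshirani2012degrees}, on a fixed region where $\betah$ is a fixed affine function of $y$ each such event confines $y$ to a proper affine subspace or a lower-dimensional algebraic variety of $\mathbb{R}^n$, and summing over the finitely many candidates $(S,s)$ gives a finite union of null sets. Two points need more care than in the lasso: the block-diagonal matrix $M_S$ is itself determined by the sign pattern, so the bookkeeping must rule out sign changes and active-set changes together; and rank deficiency of $X_S^TX_S+\lambda M_S$, permitted by the $\dagger$ in the statement, must either be excluded off a null set or absorbed by restricting to a maximal linearly independent subset of the columns of $X_S$ (using that $\ker(X_S^TX_S+\lambda M_S)\subseteq\ker X_S$, which makes the fitted value in Proposition 1 well defined regardless), exactly the full-rank reduction used for the lasso. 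With the null set accounted for, the divergence formula holds almost everywhere and the theorem follows.
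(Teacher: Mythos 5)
Your proposal is correct and shares the paper's overall skeleton (Stein's lemma applied to $\hat y = X\betah$, followed by reading the divergence off the closed form of Proposition 1), but the two technical pillars are handled by genuinely different arguments. For almost differentiability, the paper represents the residual $y - X\betah$ as the projection $P_C(y)$ onto the convex set $C=\{u : P^*(X^Tu)\le \alpha\}$ defined through the dual norm of $\sqrt{P}$, so that $\hat y = (I-P_C)y$ inherits continuity and almost differentiability from Tibshirani's projection lemma; you instead prove nonexpansiveness of $y\mapsto \hat y$ directly from monotonicity of $\partial P$, which is shorter, avoids the dual-norm computation entirely, and generalizes to any convex penalty without identifying the constraint set. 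For the divergence, the paper simply substitutes $\betah_S=(X_S^TX_S+\lambda M_S)^{\dagger}X_S^Ty$ and asserts that the divergence is the trace, implicitly treating $S$ and $M_S$ as constant; you make this step honest by partitioning $\mathbb{R}^n$ into regions $R_{S,s}$ of constant support and sign pattern, on which $\hat y$ is genuinely linear, and flagging that the union of region boundaries (sign crossings, equicorrelation-set changes, rank drops of $X_S^TX_S+\lambda M_S$) must be shown to be Lebesgue-null. That local-constancy and measure-zero bookkeeping is exactly the part the paper's proof elides, so your version is the more careful of the two; the one item you would still need to write out in full is the null-set argument itself, which you correctly identify as the main obstacle and for which the affine-subspace counting from \citet{tibshirani2012degrees} is the right template.
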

An unbiased estimate of the degrees of freedom is then

\begin{equation}
\widehat{df}(\hat{y})=trace[X_S(X_S^TX_S + \lambda M_S)^{\dagger}X^T_S].
\end{equation}

To verify this result, we compare our unbiased estimate of the degrees of freedom to simulated degrees of freedom following the set up outlined in \citet{efron2004least} and \citet{zou2007degrees} . Recall that for Gaussian $y$, the formula for the degrees of freedom can be stated as $df(\hat{y}) = \underset{i=1}{\overset{n}{\sum}} cov(\hat{y}_i,y_i)/\sigma^2$. This formula points to a convenient way to simulate the degrees of freedom. We let $\beta^*$  be the true parameter and we simulate $y$,  $B$ times such that $y^b = X\beta^* + \epsilon^b$ where $\epsilon^b \stackrel{iid}{\sim} N(0,1)$. We then calculate an estimate for the covariance. Because $y$ is standard Gaussian with $\sigma^2 = 1$, the simulated degrees of freedom is $\widehat{df}(\hat{y}) = \underset{i=1}{\overset{n}{\sum}} \widehat{cov}(\hat{y}_i,y_i)/\sigma^2$ where we simulate the covariances according to $\widehat{cov}_i = \frac{1}{B}\underset{b=1}{\overset{B}{\sum}} (\hat{y}^b_i - [H^b X\beta^*]_i)(y^b_i - [X\beta^*]_i)$. Note that $H^b$ is the hat matrix for the estimate $\hat{y}^b$. In other words $\mathbb{E}[\hat{y}^b] = X_S(X_S^TX_S + \lambda M)^{\dagger}X^T_SX\beta^* = H^bX\beta^*$ (where $S$ here depends on the estimate at iteration $b$). In our simulations, we set $B=2000$ and found that empirically, our unbiased estimate of the degrees of freedom closely matches the simulated degrees of freedom (Figure 3). 

\begin{figure}[h!]

  \centering
    \includegraphics[width=0.45\textwidth]{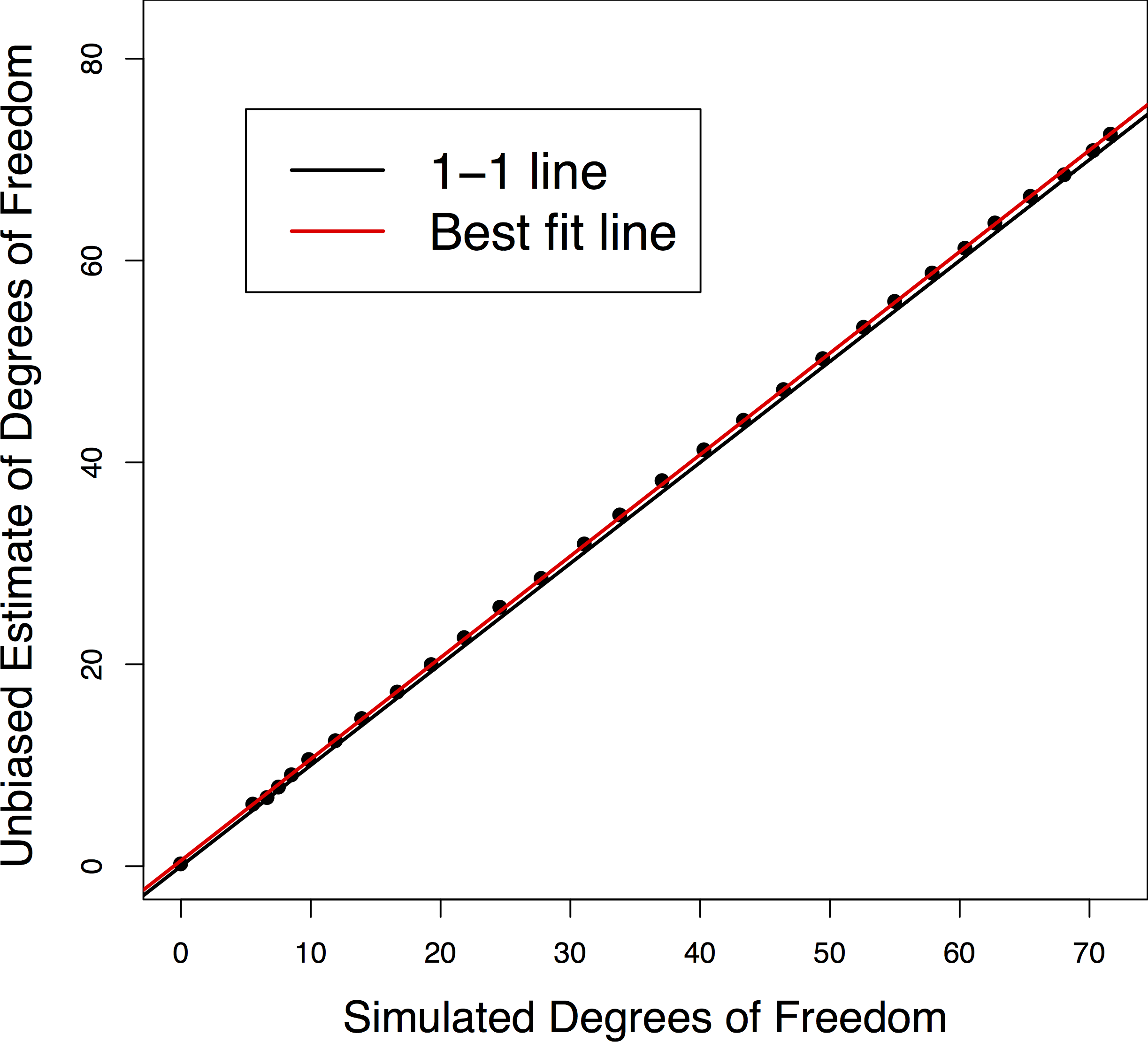}
   \caption{Comparison of our estimate for the degrees of freedom to the simulated degrees of freedom. The simulated degrees of freedom matches the estimated degrees of freedom very closely.}
\end{figure}

We can now use our unbiased estimate of the degrees of freedom to develop a model selection method for the Exclusive Lasso based on the Bayesian Information Criteria (BIC) \citep{schwarz1978estimating} and the Extended Bayesian Information Criteria (EBIC) \citep{chen2008extended}. Recall that while the BIC provides a convenient and principled method for variable selection, it can be too liberal in a high dimensional setting and is known to select too many spurious variables. \citet{chen2008extended} address this with the EBIC approach. Hence, we present both the BIC and EBIC for our method, noting that the latter is preferable in high-dimensional settings. If we assume the variance of $y$ is unknown, the respective formulas for the BIC and the EBIC are 

\begin{equation}
\text{BIC} = \log\left(\frac{\|y-\hat{y}\|^2_2}{n}\right) + \hat{\text{df}}(\hat{y}) \frac{\log(n)}{n}
\end{equation}
 
 and 
 
\begin{equation}
\text{EBIC} = \log\left(\frac{\|y - \hat{y}\|^2_2}{n}\right) + \hat{\text{df}}(\hat{y})\frac{\log(n)}{n} + \hat{\text{df}}(\hat{y})\frac{\log(p)}{n}
\end{equation}

These formulas for the BIC and the EBIC can be used to select $\lambda$ for the Exclusive Lasso in practice. Usually, we can select $\lambda$ sufficiently large to select exactly one variable per group. In cases where the design matrix does not permit selecting one variable per group, (as discussed in Sections 6 and 7) we suggest using the BIC or EBIC to select $\lambda$ and then thresholding the estimate within each group so that there is only one variable per group. We call this group-wise thresholding. 

\section{Simulation Study}
\label{sims}

We study the empirical performance of our Exclusive Lasso through two sets of simulation studies: first, for selecting one variable per group and second, for selecting a small number of variables per group. We examine three situations with moderate to large amounts of correlation between groups and within groups. We omit the low correlation setting from the simulations because they correspond to design matrices that are nearly orthogonal, satisfying both the Incoherence condition and the Faithfulness condition. This is not representative of the types of real data for which we would need to use the Exclusive Lasso and is uninteresting because all methods perform perfectly, selecting all of the truly nonzero variables and none of the false variables.

In the first simulations, we simulate data using the model $y = X\beta^* + \epsilon$ where $\epsilon \stackrel{iid}{\sim} N(0,1)$ and $\beta^*$ is the true parameter. The variables are divided into five equal sized groups and the true parameter is nonzero at one index in each group and zero otherwise. We use three design matrices each with $n=100$ observations and $p=100$ variables, to test the robustness of the Exclusive Lasso to within group correlation and between group correlation.  All three matrices are drawn from a multivariate normal distribution with a Toeplitz covariance matrix with entries $\Sigma_{ij} = w^{|i -j|}$ for variables in the same group, and $\Sigma_{ij} = b^{|i-j|}$ for variables in different groups. The first covariance matrix uses constant $b =.9$ and $w =.9$ to simulate high correlation within groups and high correlation between groups. The second covariance matrix uses $b =.6$ and $w =.9$ so that the correlation between groups is lower then the correlation within groups, resulting in high correlation within group and medium correlation between groups. The third covariance matrix uses constants $w = .6$ and $b=.6$ so that there is medium correlation both between group and within group. 

We compare two versions of our Exclusive Lasso as described in the previous section. First, we use a regularization parameter $\lambda$, large enough to ensure that the method selects exactly one element per group. In these simulations, $\lambda = \underset{i}{\max}|X_i^Ty|$ was large enough to ensure the correct structure was estimated; we refer to this as the Exclusive Lasso. The second estimate, the Thresholded Exclusive Lasso, chooses the regularization parameter $\lambda$ that minimizes the BIC and then thresholds in each group keeping the index with the largest magnitude. We also compare our method to competitors and logical extensions of competitors in the literature. We base three comparison methods on the Lasso: First, we take the largest regularization parameter that yields exactly five nonzero coefficients (Lasso); second, we take the largest $\lambda$ that has nonzero indices in each group and then threshold group-wise to keep the coefficient in each group with the largest magnitude (Thresholded Lasso); third, we take the first coefficient along the Lasso regularization path to enter the active set from each group (Thresholded Regularization Path). Our final two comparison methods use Marginal Regression: First, we take the five indices that maximize $|X_i^Ty|$ (Marginal Regression); second, we take the one coefficient in each group that maximizes $|X_{i}^Ty|$ for $i\in g$ ( Group-wise Marginal Regression). For all methods we select a set of variables $S$, and then use the data matrix restricted to this set $X_S$ to calculate an Ordinary Least Square estimate $\betah_S$. The prediction error is calculated using $\betah_S$. Results in terms of prediction error and variable selection recovery are given in Table 1.

\begin{table*}[h1]\centering
\tiny
\ra{1.3}
\begin{tabular}{@{}rrrrrrrrr@{}}\toprule

& & Exclusive & Lasso& Marginal& Group-wise & Thresholded & Thresholded&  Thresholded\\
& & Lasso & & Regression& Marginal & Exclusive  & Lasso & Regularization \\
& & & & & Regression& Lasso &  &  Path\\  \midrule
$\text{w=.9, b=.9}$\\
&\text{True Vars} &  2.180   (1.02) & 2.160   (0.82) &  1.340 (0.63) & 1.500  (0.84)  &  3.760  (0.96) & 1.760 (1.06)   & 2.080 (0.97)        \\
&\text{False Vars}  & 2.820 (1.02) & 2.840 (0.82) &  3.660 (0.63) & 3.500 (0.84) & 1.240 (0.96) & 3.240 (1.06) & 2.920 (0.97)\\
&\text{Pred Er}  & 1.351 (0.15) & 1.433 (0.13) & 1.608 (0.14)&  1.411 (0.12) & {\bf 1.115} (0.13) &  1.411 (0.17) & 1.325 (0.15)\\
$\text{w=.9, b=.6}$\\
&\text{True Vars} &  3.86 (0.88) &  3.700   (0.81) & 2.10   (0.74) &  4.020 (0.82) &   4.480  (0.68)  &  4.060   (1.10)   & 3.96 (0.90)        \\
&\text{False Vars}  &1.14 (0.88) & 1.300 (0.81) &  2.90 (0.74) & 0.980 (0.82) & 0.520 (0.68) & 0.940 (1.10) &  1.04 (0.90) \\
&\text{Pred Err}  & 1.11 (0.10) & 1.236 (0.17) & 1.55 (0.16) & 1.102 (0.11) & {\bf 1.064} (0.09) & 1.129 (0.15) & 1.10 (0.11)\\
$\text{w=.6, b=.6}$\\
&\text{True Vars} &  4.720 (0.50)  &  4.600  (0.53)  &  3.620 (0.53) &  4.200 (0.49)    & 4.940  (0.24)  & 4.720  (0.45) &  4.740 (0.44)         \\
&\text{False Vars}  & 0.280 (0.50) & 0.400 (0.53) & 1.380 (0.53) & 0.800 (0.49) & 0.060 (0.24) & 0.280 (0.45) & 0.260 (0.44) \\
&\text{Pred Err}  & 1.066 (0.15) & 1.094 (0.15) & 1.304 (0.15) & 1.162 ( 0.15) & {\bf 1.022} (0.10) & 1.062 (0.13) &  1.057 (0.13)\\
\bottomrule
\end{tabular}

\caption{We compare the Exclusive Lasso and a thresholded version of the Exclusive Lasso to alternative variable selection methods as described in the Simulation section.  Here, there is one nonzero coefficient in each of the five groups, $n=100$ and $p=100$, and we vary the amount of between (b) and within (w) group correlation of the design matrix with Toeplitz covariance.  The Thresholded Exclusive Lasso outperforms all of the competing methods in both the recovery of truly nonzero variables and prediction error. }
\end{table*}

The thresholded version of the Exclusive Lasso outperforms all other methods at all levels of correlation, likely because it selects more variables that are truly nonzero. We observe that the thresholded estimators generally perform better then the non thresholded estimators. Among non-thresholded estimators, the Exclusive Lasso also performs the best at all levels of correlation. These simulations highlight the Exclusive Lasso's robustness to moderate and large amounts of correlation, which is important considering we expect variables in the same group to be similar and possibly highly correlated with each other.

In the second set of simulations, we also simulate data using the model $y = X\beta^* + \epsilon$ where $\epsilon \sim N(0,1)$ and $\beta^*$ is the true parameter for $n=p=100$. In these simulations the variables are divided into the same five equal-sized groups but the true parameter can be nonzero at more then one index in each group. Specifically, there are seven nonzero coefficients distributed so that three groups have exactly one nonzero index and two groups have two nonzero indices each.  We simulate the design matrices in the same way we simulate design matrices in the first set of simulations to have varying levels of between and within group correlation. 

We compare three methods: the Exclusive Lasso, the Lasso, and the Lasso applied independently to each group. For all methods, we use the BIC to select the regularization parameter. When we apply the Lasso separately to each group we use separate regularization parameters as well. 

Results in terms of prediction error and variable selection given in Table 2.

\begin{table*}[h1] \centering
\footnotesize
\ra{1.3}
\begin{tabular}{@{}rrrrr@{}}\toprule

& & $\text{Exclusive}$ & $\text{Lasso}$& $\text{Group-wise Lasso}$\\ \midrule
$\text{w=.9, b=.9}$\\
&\text{True Vars} &  6.820 (0.48) & 6.940 (0.24) & 4.920 (0.88)      \\
&\text{False Vars}  &6.280 (2.41) & 9.380 (3.08) & 5.880 (1.86)\\
&\text{Pred Er}  & {\bf 1.262} (0.22)&   1.295 (0.22)& 1.967 (0.64)\\
$\text{w=.9, b=.6}$\\
&\text{True Vars} & 6.740 (0.69) &  6.940 (0.24) &  4.780 (1.02)    \\
&\text{False Vars}  &6.420 (2.64) & 9.360 (3.35) &  6.180 (2.03) \\
&\text{Pred Err}  & {\bf 1.232} (0.22) & 1.259 (0.23) & 1.944 (0.54)\\
$\text{w=.6, b=.6}$\\

&\text{True Vars} & 7.000 (0.00) & 7.000 (0.00)   &   6.720 (0.45)     \\
&\text{False Vars}  &3.940 (2.61) &   5.320 (3.80) & 2.080 (1.28) \\
&\text{Pred Err}  &{\bf  1.197 }(0.19)& 1.233 (0.21) & 1.265 (0.29)  \\

\bottomrule
\end{tabular}

\caption{We compare the Exclusive Lasso to the Lasso and the Group-wise Lasso with BIC model selection for the second simulation scenario where we have five groups with either one or two true variables per group for a total of seven true variables. Again, $n=100$ and $p =100$ with the amount of between and within group correlation of the design matrix is varied.  The Exclusive Lasso performs best in terms of variable selection and prediction error.}
\end{table*}

The Exclusive Lasso has the best prediction error across all three simulations. The Exclusive Lasso selects fewer false variables then the Lasso and selects more true variables then the Group-wise Lasso. These simulations also suggest the Exclusive Lasso is more robust to high levels of correlation. Overall, our results suggest that the Exclusive Lasso performs best at within group variable selection when we have known group structure with relatively large amounts of correlation within and or between groups.

\section{NMR Spectroscopy Study}
\label{s:nmr}

Finally, we illustrate an application of the Exclusive Lasso for selecting the chemical shift of molecules in Nuclear Magnetic Resonance (NMR) spectroscopy. NMR spectroscopy is a high-throughput technology used to study the complete metabolic profile of a biological sample by measuring a molecule's interaction with an external magnetic field \citep{de2013vivo,cavanagh1995protein}. This technology produces a spectrum where the chemical components of each molecule resonate at a particular ppm. See Figure 4.b for example. A central analysis goal of NMR spectroscopy is identifying and quantifying the molecules in a given biological sample. This is challenging for numerous reasons discussed in \citep{ebbels2011processing,weljie2006targeted,zhang2009interdependence}. We seek to use the Exclusive Lasso to solve one of the major analysis challenges with NMR spectroscopy: accounting for positional uncertainty when quantifying relative concentrations of known molecules in a sample. Known as ``chemical shifts'', every molecules' chemical signature is subject to a random translation in ppm (Figure 4.a) due to the external physical environment of the sample \citep{de2013vivo} . One way to model this positional uncertainty, is to create an expanded dictionary of shifted molecules to use for quantification. With this expanded dictionary, we can consider each molecule and its shifts as a group, and use the Exclusive Lasso to select the best shift of each molecule for quantification. 

We choose not to use real NMR spectroscopy data as often true molecules and true concentrations are unknown. Instead we create a simulation based on real NMR molecule spectra in order to test our method for the purpose of NMR quantification. In our application, we simulate an NMR signal using a dictionary of reference measurements for thirty-three unique molecules. The dictionary, $X \in \mathbb{R}_+^{4000 \times (33*11)}$, consists of spectra for thirty-three molecules and ten artificial positional shifts for each molecule, five left and five right. These shifts are no more then .05ppm greater than or less than the reference measurement yielding eleven possible positions for each molecule. We use one randomly selected shift for each molecule, hence simulating the positional uncertainty found in real data. The columns of this expanded dictionary are strongly correlated with each other. Molecules are correlated with their ten shifts as well as other molecules with similar chemical structures. If we consider each molecule and its shifts a group, this results in a data set that has high correlation between groups as well as high correlation within each group as seen in Figure 5.a.

The simulated NMR signal, $y$, is a linear combination of the molecules in the dictionary with values chosen so that the signal has several properties that we observe in real data. For example, real NMR data can contain several unique molecules. Many of these will resonate at similar frequencies, causing peaks to overlap \citep{de2013vivo}. Informally, this yields signals that appear smoother with less pronounced peaks because of the crowding. With thirty-three molecules we can recreate this effect in the region between .5 and 0 ppm  (see Figure 5.b). We then simulate our signal using positive noise so that $y= X\beta^* + \epsilon$ where $\epsilon$ is the absolute value of Gaussian noise; this is done as real NMR spectra is non-negative.

\begin{figure}
        \centering
        \begin{subfigure}[b]{0.3\textwidth}
                \includegraphics[width=\textwidth]{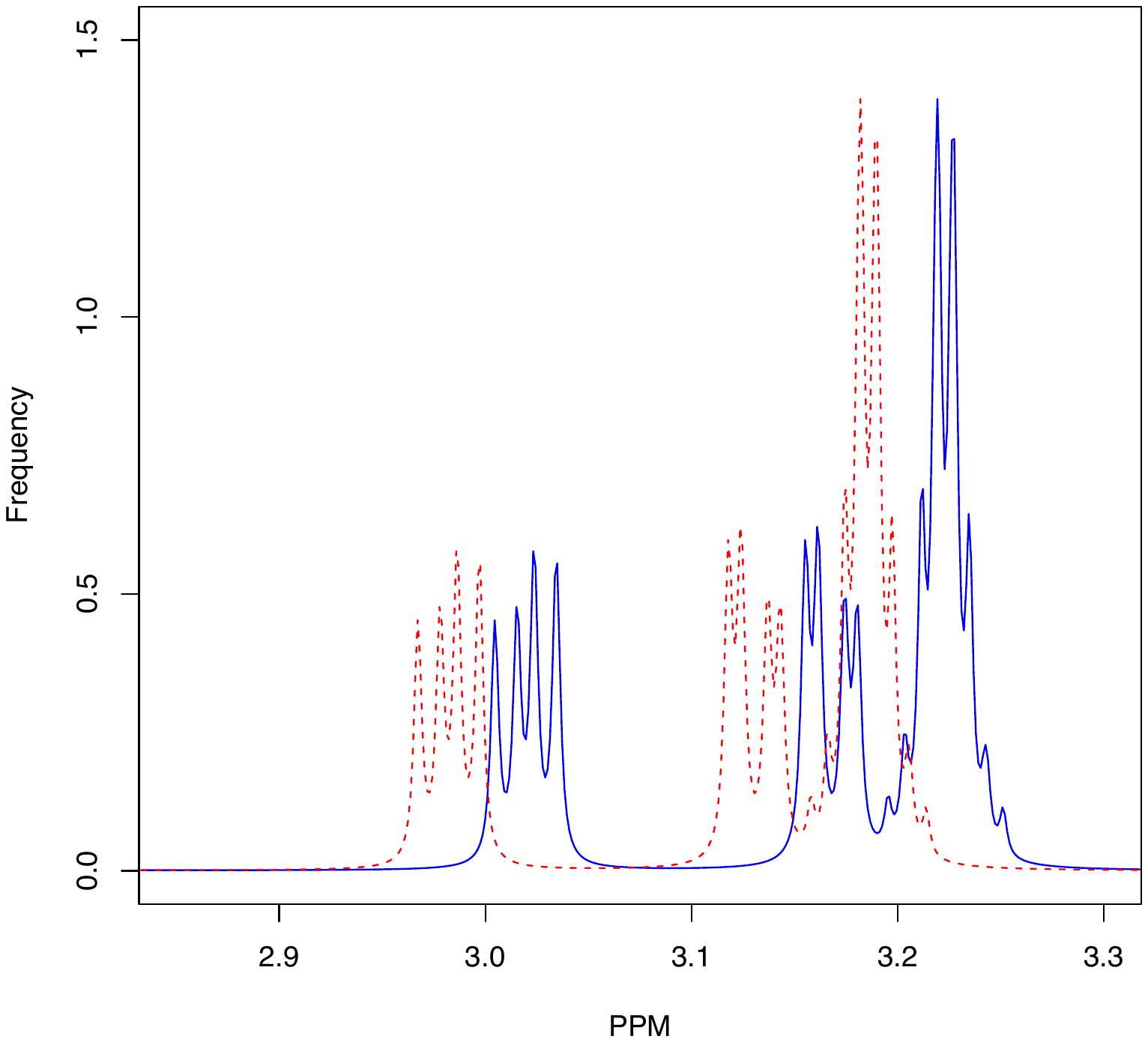}
                \caption{}
                \label{fig:shift}
        \end{subfigure}%
        ~ 
        \begin{subfigure}[b]{0.7\textwidth}
                \includegraphics[width=\textwidth]{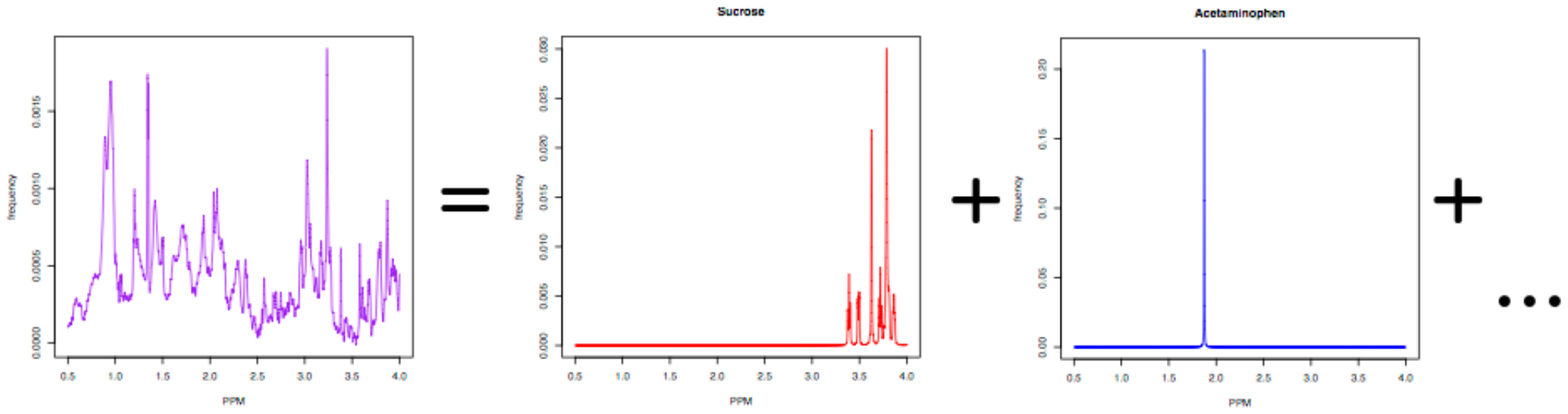}
                \caption{}
                \label{fig:nmr}
        \end{subfigure}
            \caption{ (a) Positional uncertainty of the chemical shift for the molecule Carnosine. All NMR spectroscopy signals are subject to random translations in ppm, due to the chemical environment of the sample. (b) NMR spectra of a neuron cell sample. NMR spectroscopy measures concentrations of all molecules in a sample. The observed signal is a linear combination of its unobserved component molecule's chemical signatures.}\label{fig:nor}
\end{figure}

We then use each method, the Exclusive Lasso, the Lasso, and the Group-wise Lasso, to select a set of variables $S$, consisting of one shift from each molecules' group of chemical shifts. Where applicable we use the thresholded versions of the estimates where we select $\lambda$ using the BIC and threshold group-wise so that there is only one nonzero variable in each group. Finally, we compare these methods to an ordinary least squares estimate that uses the original un-expanded dictionary without modeling the positional shifts. In Table 3, we report the prediction error and mean squared error, $MSE = \frac{1}{p} \underset{i=1}{\overset{p}{\sum}}(\beta_i^* - [(X_S^TX_S)^{-1}X^T_Sy]_i)^2$ so that we can accurately compare the methods as variable selection procedures. This measure eliminates the shrinkage that occurs with penalized regression methods and allows us to focus on how accurately we recover the concentrations of each molecule.

\begin{table*} \centering
 \begin{tabular}{rrr}
 \hline
 & Mean Squared Error $(\beta)$  & Prediction Error\\
 \hline
 Exclusive Lasso & 1.072(.03) & 1.339e-04(9.797e-07)\\
 OLS regression & 2.871(.06) & 2.605e-04(1.162e-06)\\
Marginal Regression &  1.163(.23) &  1.452e-04(1.841e-05)\\
Lasso &  2.092(.14) &   8.025e-05( 1.091e-05)\\
\hline

\end{tabular}
\caption{In our simulation using NMR spectroscopy data, we seek to quantify concentrations of molecules in a sample (see $MSE(\beta)$) under positional uncertainty in chemical shifts. Here, OLS regression quantifies concentrations without accounting for positional uncertainty whereas the Exclusive Lasso, Marginal Regression and the Lasso account for positional uncertainty by selecting one chemical shift for each molecule from an expanded dictionary. Given the selected variables, $\hat{S}$, these methods use OLS estimates for $X_{\hat{S}}$ to estimate $\hat{\beta}$ and quantify concentrations, the accuracy of which is measured by $MSE(\beta) = \frac{1}{p} \|\hat{\beta} - \beta^*\|_2^2$.  }
\end{table*}

\begin{figure}
        \centering
        \begin{subfigure}[b]{0.38\textwidth}
                \includegraphics[width=\textwidth]{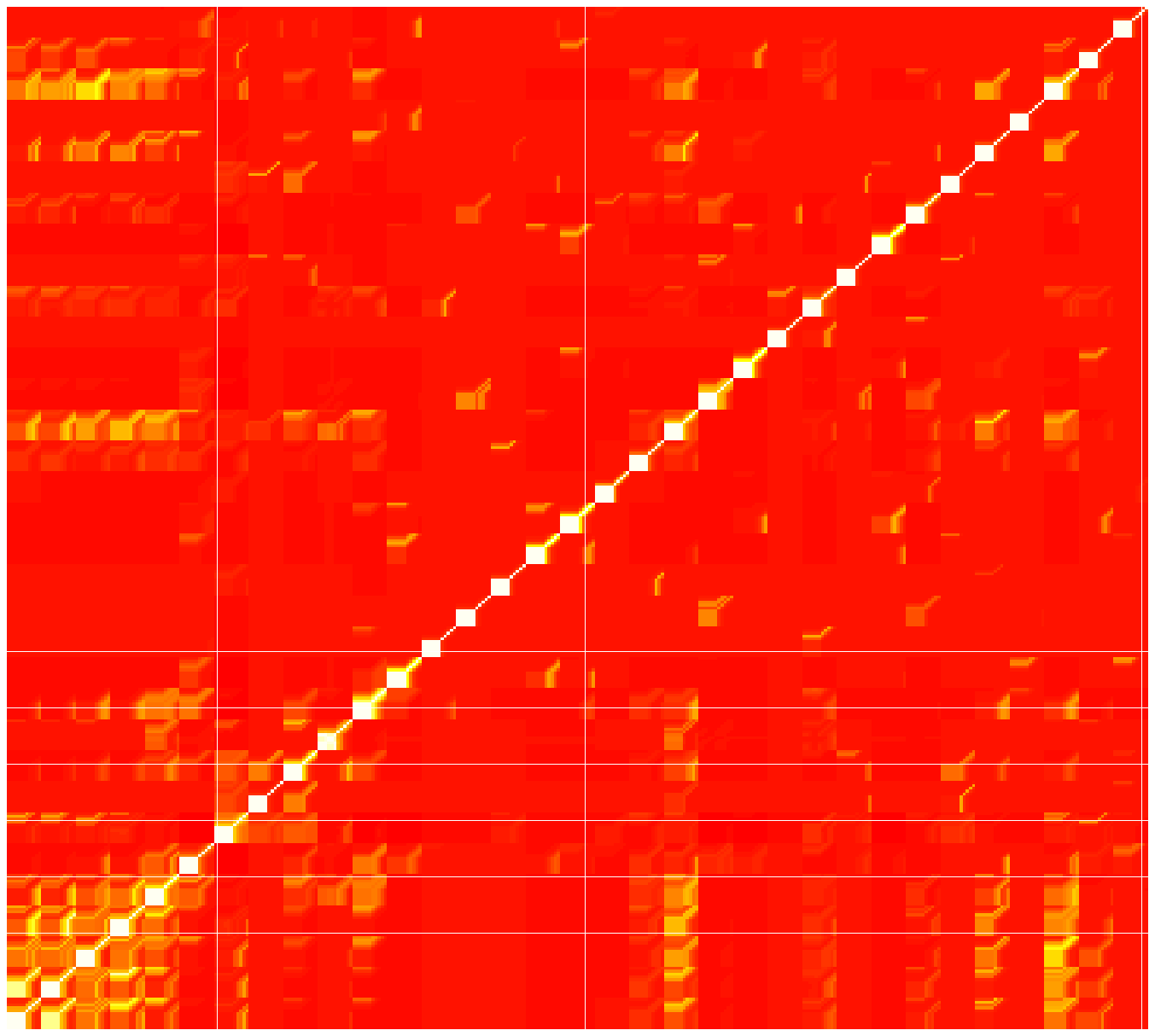}
                \caption{}
                \label{fig:cov}
        \end{subfigure}%
        ~ 
        \begin{subfigure}[b]{0.62\textwidth}
                \includegraphics[width=\textwidth]{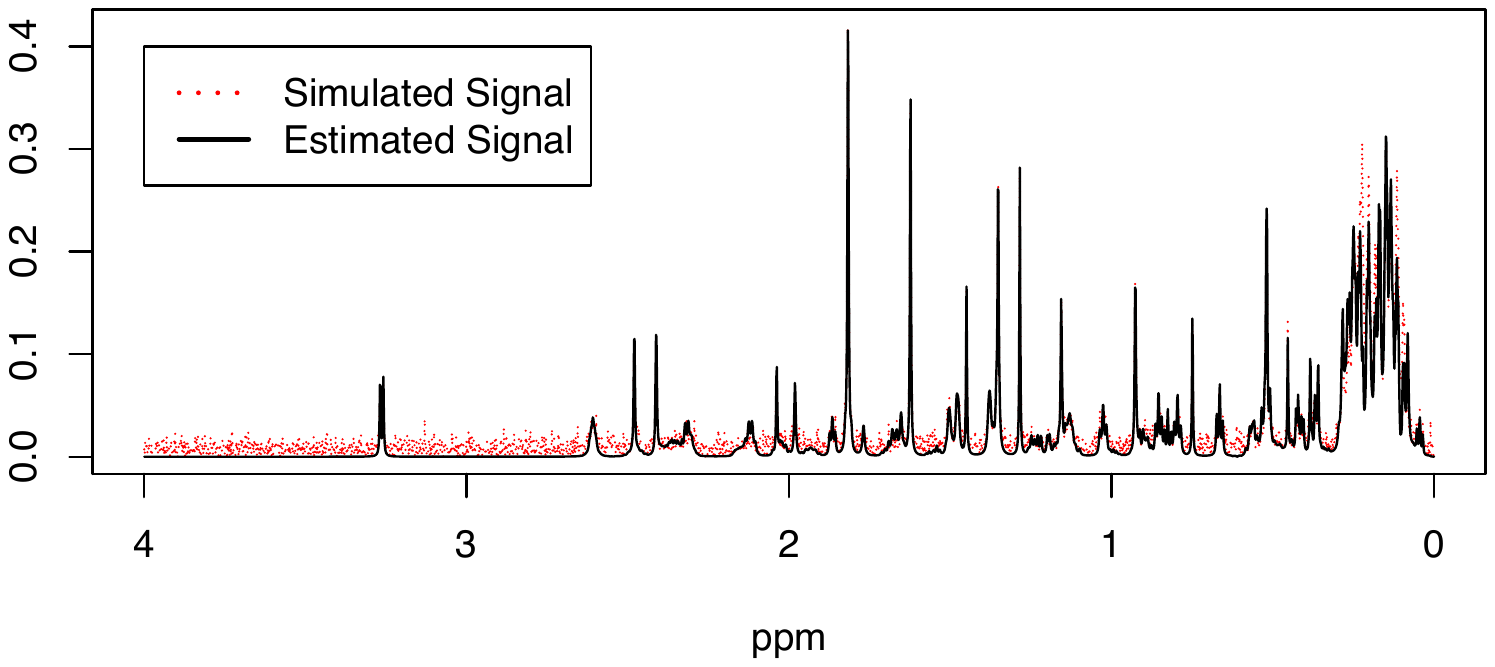}
                \caption{}
                \label{fig:quant}
        \end{subfigure}
            \caption{(a) The covariance matrix for the expanded dictionary of molecules. We simulate chemical shifts by generating 10 lagged variables for each of the 33 molecules (blocks on the diagonal). A molecule and its 10 shifts comprise a group where each variable in the group is very correlated with every other member in the group. We can also see that the molecules are very correlated with each other as we include molecules that are chemically similar. (b) The simulated NMR signal and the signal estimated using the Exclusive Lasso. The estimate recovers most of the peaks suggesting it is selecting a useful set of shifts. The estimate also zeros out most of the noise in the simulated signal. }\label{fig:nmr2}
\end{figure}

Among all methods, the Exclusive Lasso performs best at quantifying molecule concentrations under positional uncertainty.  This case study highlights a real example where there is high correlation both within and between pre-defined groups.  Consistent with our simulation studies, the Exclusive Lasso performs best in these situations.

\section{Discussion}
\label{s:dis}

Although others have introduced the Exclusive Lasso penalty, we are the first to investigate the method's statistical properties in the context of sparse regression for within group variable selection. We propose two new characterizations of the Exclusive Lasso in an effort to understand the estimate. The first characterization is an explicit definition of $\betah$ in terms of the support set that allows us to derive the degrees of freedom. This expression is similar to that of the ridge regression estimate, especially when there is exactly one nonzero variable in each group. The second characterization allows us to explore the properties of the active set.  We then prove that the Exclusive Lasso is prediction consistent under weak assumptions, the first such result. Additionally, we develop a new algorithm for fitting the Exclusive Lasso based on proximal gradient descent and derive the degrees of freedom so that we can use the BIC formula or the EBIC formula for model selection. 

Overall, we find that the Exclusive Lasso compares favorably to existing methods. Even though the Exclusive Lasso is a more complex composite penalty, convergence results for the Exclusive Lasso Algorithm are comparable to convergence rates for standard first order methods for computing the Lasso. Additionally, through several simulations, we find that the Exclusive Lasso not only selects at least one variable per group better then any existing method, but it also performs better when there is strong correlation both within groups and between groups. 

In this work, we focus on statistical questions important to the practitioner, but there are several directions for future work. Investigating variable selection consistency, overlapping or hierarchical group structures, and inference are important open questions. One could also use the Exclusive Lasso penalty with other loss functions such as that of generalized linear models. Additionally, there are many possible applications of our method besides NMR spectroscopy such as creating index funds in finance, and selecting genes from functional groups or pathways, among others.

Overall, the Exclusive Lasso is an effective method for within group variable selection in sparse regression; an R-package will be made available for others to utilize our method. 

\section{Appendix}

\section*{Proof of theorems 1 and 2}
The proof of theorems 1 and 2 follows the proof technique presented in \citet{chatterjee2013assumptionless}. There are several differences due to the structure of our penalty, however, the assumptions are the same. We assume that the columns of the design matrix $\{X_1 \dots X_p\}$ are possibly dependent random variables such that the covariance matrix for $\{X_1 \dots X_p\}$ is $\Sigma$. We assume the entries of $X$ are bounded so that $|X_{i,j}| \le M $ and that the data we observe $(Y_1,X_1) \dots (Y_n, X_n)$ is independent and identically distributed. We also assume the value of the penalty evaluated at the true parameter is bounded so that $P(\beta^*) \le K $ and that the response is generated by the linear model $Y = X\beta^* + \epsilon$ where $\epsilon \sim N(0,\sigma^2)$. Let $\mathcal{G}$ be a collection of predefined non overlapping groups such that $\underset{g \in \mathcal{G}}{\cup} g = \{1\dots p \}$.

Instead of the Exclusive Lasso penalty, we work with the equivalent constrained optimization problem $$\betah = \underset{\beta: P(\beta)\le K}{\text{argmin}}\|Y - X\beta\|_2$$

Let $C = \{X\beta: P(\beta) \le K\}$. By definition, $\hat{Y}$ is the projection of $Y$ onto the set $C$. For constrained optimization problems first order necessary conditions for an optimal solution state that for all $d$ in the linear tangent cone a solution to the problem $x^*$ necessarily satisfies
$f'(x^*;d)\ge 0$. In our case the linear tangent cone is the set $T_{\ell}(\hat{Y}) = \{(x-\hat{Y}): x \in C\}$ so an optimal solution satisfies $\langle -(Y-\hat{Y}), (x-\hat{Y}) \rangle \ge 0$ for all $x \in C$. Letting $x=Y^*$ we can rewrite $\langle (Y-\hat{Y}), (Y^*-\hat{Y}) \rangle \le 0$ as the  inequality

\begin{equation*}
\begin{aligned}
\|Y^* - \hat{Y}\|^2_2 &\le \langle (Y-Y^*), (\hat{Y}-Y^*)\rangle \\
&=\underset{i=1}{\overset{n}{\sum}}\epsilon_i \left(\underset{j=1}{\overset{p}{\sum}} (\betah_j - \beta_j^*)X_{i,j}\right)\\
&=\underset{j=1}{\overset{p}{\sum}} (\betah_j - \beta_j^*) \left(\underset{i=1}{\overset{n}{\sum}}\epsilon_i X_{i,j}\right)
\end{aligned}
\end{equation*}

Our assumption $P(\beta^*) \le K$ and the definition of $\betah$ let us bound $\underset{j=1}{\overset{p}{\sum}} (\betah_j - \beta_j^*)$ so that 

$$\underset{j=1}{\overset{p}{\sum}} (\betah_j - \beta_j^*) \le 2 (K + |\mathcal{G}|)$$

This implies that if we let $U_j = \underset{i=1}{\overset{n}{\sum}} \epsilon_i X_{i,j}$ then 

$$\|Y^* - \hat{Y}\|^2 \le  2 (K + |\mathcal{G}|)\underset{1\le j\le p}{\max}|U_j|$$

Because  $U_j \sim N\left(0,\sigma^2\underset{i=1}{\overset{n}{\sum}}X^2_{i,j}\right)$ we have the bound 

$$\mathbb{E}(\underset{1 \le j \le p}{\max}|U_j| ) \le M\sigma \sqrt{2n\log(2p)}$$
See lemma 3 in  \cite{chatterjee2013assumptionless} for proof of the bound.
Therefore

$$\mathbb{E}\|Y^* - \hat{Y}\|^2 \le  2 (K + |\mathcal{G}|) M\sigma \sqrt{2n\log(2p)}$$

which gives us theorem 2: 

$$\mathbb{E}[\widehat{MSPE}(\betah)] \le 2 (K + |\mathcal{G}|) M\sigma \sqrt{\frac{2\log(2p)}{n}} $$

We use this result to prove theorem 1. By the independence of the data $(Y,X)$ and $\betah$ we have 

$$\mathbb{E}(Y^* - \hat{Y})^2 =\underset{j,k=1}{\overset{p}{\sum}}(\beta^*_j - \betah_j)(\beta^*_k - \betah_k)\mathbb{E}(X_jX_k)$$

note that 

$$\frac{1}{n}\|Y^* - \hat{Y}\|^2 = \underset{j,k=1}{\overset{p}{\sum}}(\beta^*_j - \betah_j)(\beta^*_k - \betah_k)X_jX_k$$

Combining these two expressions yields

$$\mathbb{E}(Y^* - \hat{Y})^2 -\frac{1}{n}\|Y^* - \hat{Y}\|^2 = \underset{j,k=1}{\overset{p}{\sum}}(\beta^*_j - \betah_j)(\beta^*_k - \betah_k) [\mathbb{E}(X_jX_k) - \frac{1}{n}X_jX_k]$$

We then define $V_{j,k} = [\mathbb{E}(X_jX_k) - \frac{1}{n}X_jX_k]$ and note that it is bounded $|V_{j,k}| \le 2M^2$. By Hoeffding's inequality 

$$\mathbb{E}(\underset{1\le j,k \le p}{\max}|V_{j,k}|) \le 2M^2\sqrt{\frac{2\log(2p^2)}{n}}$$

We use a version of Hoeffding's inequality that is rather uncommon so we refer the interested reader to the appendix of \cite{chatterjee2013assumptionless} for a derivation of the result.

Finally 

$$\mathbb{E}(Y^* - \hat{Y})^2 -\frac{1}{n}\|Y^* - \hat{Y}\|^2 \le 4(K + |\mathcal{G}|)^2 \underset{1\le j,k \le p}{\max}|V_{j,k}|$$

Combining our results yields theorem 1

$$\mathbb{E}(Y^* - \hat{Y})^2 \le 2(K + |\mathcal{G}|)M\sigma \sqrt{\frac{2\log(2p)}{n}} + 8(K + |\mathcal{G}|)^2M^2\sqrt{\frac{2\log(2p^2)}{n}}$$

\section*{Proof of corollary 1}
The $MSPE(\betah)$ is equal to $\mathbb{E}\|\betah - \beta^*\|_{\Sigma}$. We can bound $\|\betah - \beta^*\|_2$ by the MSPE such that $\|\betah - \beta^*\|^2_2 \le \frac{1}{c}\|\betah - \beta^*\|_{\Sigma}^2$ showing that $\|\betah - \beta^*\|_2^2$ goes to 0 as $MSPE(\betah)$ goes to 0.

\section*{Proof of theorem 3}
Our coordinate descent algorithm calculates the proximal operator by solving the optimization problem

$$\text{prox}_P (y) = \underset{x}{\text{argmin}} \frac{1}{2} \|y - x\|_2^2 + \lambda P(x)$$

We show that the assumptions for theorem 4.1 from \cite{tseng2001convergence} hold for the problem above.  For a function of the form 

$$f(x) = g(x) + h(x)$$

where $g$ is convex and differentiable and $h$ is convex but not necessarily differentiable, verifying the assumptions involves showing that 
\begin{enumerate}
\item The differential part of our function $g$ satisfies assumption (A1) from \cite{tseng2001convergence} 

$$\text{Assumption: (A1) The domain of } g \text{ is open and } g \text{ is Gateux differentiable}$$

\item The function $f$ is a regular function. \\
\item The level set $X_0 = \{x: f(x) \le f(x^0)\}$ is compact and that $f$ is continuous on $X_0$
\item For every pair $i,k \in \{1\dots p\}$ it follows that $f$ is jointly pseudo convex in $x_i$ and $x_k$
\end{enumerate}

First we state several definitions. 

We say direction $d$ is a vector in $\mathbb{R}^n$. We allow $d_k$ to be the scalar in the $k^{th}$ position in the vector $(0 \dots 0, d_k , 0 \dots 0)$. We abuse notation if the meaning is unambiguous, and also let $d_k$ denote the entire vector with $0$s in all positions except for the $k^{th}$ position.
It is typical to define first order optimality conditions in terms of the Gateaux derivative. We however use the more general forward variation defined as follows:
\begin{definition}
 For a function $f$ the forward variation in direction $d$ at x is $$f_+'(x;d)=\underset{t \downarrow 0}{\lim} \frac{f(x+td)-f(x)}{t}$$
\end{definition}
The Gateaux derivative exists if both the forward and backward variation exist and are equal. Tseng uses the Gateaux derivative to define his optimality conditions but for our unconstrained convex non-differentiable problem it is necessary and sufficient for a minimizer of $f$ to satisfy $f_+' (x;d) \ge 0 $ for all $d \in \mathbb{R}^n$. We also use a notion called regularity. Note that this is the same definition of regularity given in \cite{tseng2001convergence} communicated here for convenience. Throughout the rest of the paper we use the forward variation and the directional derivative interchangeably.

\begin{definition}
A function f is regular at $x$ if $f'(x;d) \ge 0 $ for all $d$ such that $f'(x;d_k) \ge 0$
\end{definition}

Regularity ensures that if we have a point that minimizes f coordinstewise, then the point minimizes the function f.

\begin{definition}
A function $f$ is pseudoconvex if $f(x + d) \ge f(x)$ whenever $x \in dom(f)$ and $f'(x;d) \ge 0$
\end{definition}

Assumption 1: The differential part of our function $g$ satisfies assumption (A1) from \cite{tseng2001convergence} 

\begin{proof}
If we let 
$$g(x) =  \frac{1}{2} \|y - x\|_2^2$$

 its domain is $\mathbb{R}^n$ which is an open set. We must also show that $g(x) = \frac{1}{2}\| y - x\|_2^2$  is Gateux-differntiable on $\mathbb{R}^n$.
 \begin{equation*}
 \begin{aligned}
g'(x; d) &= \underset{t \downarrow 0}{\lim} \frac{g(x + td) - g(x)}{t} \\
&= \underset{t \downarrow 0}{\lim} \frac{1}{2t} \|y-(x+td)\|_2^2 - \frac{1}{2t} \|y-x\|_2^2  \\
&= -(y - x)^Td \\
&= \nabla g(x)^Td
\end{aligned}
\end{equation*}
A similar argument holds as $t \uparrow 0$
\end{proof}

Assumption 2: the function $f$ is a regular function

\begin{proof}
 Our goal is to show that if we have a point $x$ that minimizes $f$ point wise i.e. that $f'(x;d_k) \ge 0$ for all $d_k$ then we have a point that minimizes $f$ and satisfies the standard first order necessary and sufficient condition for optimality $f'(x;d) \ge 0$ for all $d \in \mathbb{R}^n$. We know that $g(x) = \frac{1}{2}\| y - x\|_2^2$  is Gateux-differntiable on $\mathbb{R}^n$.

Next we show that the entire function $f (x) = g(x) + h(x)$ is regular. Assume that the point $x$ minimizes $f$ point wise therefore satisfying:
$$f'(x;(0...0,d_k,0...0)) \ge 0$$

for all $d_k$. Then it follows that

\begin{equation*}
\begin{aligned}
f'(x;d) &= \nabla g(x)^Td + \underset{t \downarrow 0}{\lim} \frac{(\underset{i=1}{\overset{n}{\sum}} |x_i + td_i|)^2 - (\underset{i=1}{\overset{n}{\sum}}|x_i|)^2}{t} \\
&= \nabla g(x)^Td + \underset{t \downarrow 0}{\lim} \frac{\left (\underset{i=1}{\overset{n}{\sum}} |x_i + td_i| - \underset{i=1}{\overset{n}{\sum}}|x_i|\right)\left(\underset{i=1}{\overset{n}{\sum}} |x_i + td_i| + \underset{i=1}{\overset{n}{\sum}}|x_i| \right)}{t} \\
&= \nabla g(x)^Td + \underset{t \downarrow 0}{\lim} \frac{\left(\underset{i=1}{\overset{n}{\sum}} |x_i + td_i| - \underset{i=1}{\overset{n}{\sum}}|x_i|\right)}{t}\underset{t \downarrow 0}{\lim}\left(\underset{i=1}{\overset{n}{\sum}} |x_i + td_i| + \underset{i=1}{\overset{n}{\sum}}|x_i|\right) \\
&= \nabla g(x)^Td + \underset{t \downarrow 0}{\lim} \frac{\underset{i=1}{\overset{n}{\sum}} |x_i + td_i| - \underset{i=1}{\overset{n}{\sum}}|x_i|}{t}2\|x\| \\
&\ge \nabla g(x)^Td + \underset{i=1}{\overset{n}{\sum}} \underset{t \downarrow 0}{\lim}\frac{ |x_i + td_i| - |x_i|}{t}2\|x\| \\
&=  \underset{i=1}{\overset{n}{\sum}} f'(x;(0,\dots,0,d_k,0,\dots,0))\\
&\ge 0
\end{aligned}
\end{equation*}
\end{proof}

Assumption 3: The level set $X_0 = \{x: f(x) \le f(x^0)\}$ is compact and that $f$ is continuous on $X_0$
\begin{proof}
We show that the function is continuous by showing that the penalty is continuous and that the differentiable part of the objective function is continuous.  Let $x,y \in X_0$ then there exists a $\delta$ such that for 

$$|x-y| \le \delta$$

it follows that 

$$|P(x) - P(y)| \le \epsilon$$

To find $\delta$ consider

\begin{equation*}
\begin{aligned}
|P(x) - P(y)| & \le P(x-y) \\
&= \underset{g \in \mathcal{G}}{\sum}(\underset{i \in g}{\sum} |x_i - y_i|)^2\\
&\le \underset{g \in \mathcal{G}}{\sum}(\underset{i \in g}{\sum} \delta_i)^2
\end{aligned}
\end{equation*}

Note that the first line follows from the reverse triangle inequality. If $i \in g$ then for any $\epsilon >0$ we can define $\delta$ such that $\delta_i = \frac{\sqrt{\epsilon}}{n_g \sqrt{|\mathcal{G}|}}$ which shows that the penalty is continuous on the set. 

To show that the term $\|y-x\|_2^2$ is continuous  consider two points  $x,z \in X_0$ and suppose 

$$|x-z| \le \delta$$

Consider 

\begin{equation*}
\begin{aligned}
\left|\|y-x\|^2_2 - \|y-z\|^2_2\right| &\le \|(y-x)-(y-z)\|^2_2\\
&= \|x-z\|^2_2\\
&\le (\|x-z\|_1)^2\\
&=( \underset{i}{\sum}|x_i - z_i|)^2\\
 &\le ( \underset{i}{\sum} \delta_i)^2
\end{aligned}
\end{equation*}

So for $\delta_i \le \frac{\sqrt{\epsilon}}{n}$ the term $\|y-x\|^2_2$ is continuous. Therefore $f$ is continuous because the sum of continuous functions is a continuous function. Using theorem 1.6 of \cite{rockafellar2009variational}, continuity implies that the level sets are closed.

 The level sets also must be bounded. For any level set

$$X_0 = \{x: \|y-x\|^2_2 + \lambda P(x) \le \|y-x_0\|^2_2 + \lambda P(x_0)\}$$

If we let $\|y-x_0\|^2_2 + \lambda P(x_0)=\alpha$ we can consider a vector of the form $x_{\alpha} = (0,\dots,0, \sqrt{\frac{|\alpha| +1}{\lambda}} ,0, \dots,0)$. Our penalty evaluated at this vector gives $\lambda P(x_{\alpha} ) = |\alpha| + 1 > \alpha $. Since $\|y - x\|\ge 0$ for all $x \in \mathbb{R}^n$ the objective function $f(x_{\alpha}) > \alpha$ . This implies that for all $x \in X_0$ there exists an $M \in \mathbb{R}$ such that $\underset{i}{\max}|x_i| \le M$. Therefore the level sets are bounded.

By the Heine-Borel theorem since $X_0$ a closed bounded subset of $\mathbb{R}^n$ it is compact.
\end{proof}

Assumption 4: For every pair $i,k \in \{1 \dots p\}$ it follows that $f$ is jointly pseudoconvex in $x_i$ and $x_k$.

\begin{proof}
For any pair of indices $i,k \in \{1 \dots p\}$ the function 

$$\|y-x\|^2_2 + \lambda P(x)$$

is jointly convex in $x_i$ and $x_k$. 
Suppose indices $i$ and $k$ are in the same group. We can rewrite the objective function as

\begin{equation*}
\begin{aligned}
f_1(x_i, x_k) &= \|x\|_2^2 - 2y^Tx + y^Ty + \lambda \underset{g \in \mathcal{G}}{\sum} (\underset{j \in g}{\sum}|x_j|)^2\\
&= x_i^2 + x_k^2 + x_i c_0 + x_kc_1 + (x_i + x_k)^2 + c_2
\end{aligned}
\end{equation*}

 where $c_0,c_1,c_2$ are terms constant in $x_i$ and $x_k$ and $y_{i,k} = (y_i,y_k)$ and $x_{i,k} = (x_i,x_k)$ are the vectors restricted to indices $i,k$. Both the $\ell_2$ norm and the affine function of $x_{i,k}$ are convex. The function $f_1(x_i,x_k) $ has a positive semidefinite hessian so it is also convex. 

If $i,k$ are in different groups we rewrite the objective function as 

\begin{equation*}
\begin{aligned}
f_2(x_i,x_k) &= 2x_i^2 +2 x_k^2 + c_0x_i + c_1x_k + c_2\\
\end{aligned}
\end{equation*}

Function $f_2$ also has a positive semidefinite hessian so it is also convex.  \\

Therefore the function $f$ is convex in every pair of indices which implies that it is pseudoconvex in every pair of indices. 

\end{proof}

Given that the objective function satisfies all of the assumptions for \cite{tseng2001convergence} Theorem 4.1 we can say that our coordinate descent algorithm converges to a stationary point. Because our function is convex the stationary point is a global minimum. 

\section*{Proof of theorem 4}

Our result depends on work by \cite{schmidt2011convergence}. We seek the convergence rate for the our Exclusive Lasso algorithm. In our algorithm at each step $k$ the proximal operator is computed to within a small error $\epsilon_k$ such that the iterate $x_k = \epsilon_k + \underset{x}{\argmin}\|y-x\|_2^2 + \lambda P(x)$.  As long as the sequence of errors is summable the algorithm will converge at a rate of at least $O(1/k)$ when the following assumptions hold. For function $f(x)= g(x) + h(x)$ we assume

\begin{enumerate}
\item The function $g$ is convex with a lipschitz-continuous gradient.
\item The function $h$ is a lower semi-continuous proper convex function.
\item There exists a point $x^* \in \mathbb{R}$ that minimizes $f$.
\item The points $x_k$ are $\epsilon_k$-optimal solutions to the proximal operator optimization problem at iteration $k$. 
\end{enumerate}

We must verify that these assumptions hold for the Exclusive Lasso \\

Assumption 1: In our case $g(\beta) = \frac{1}{2}\|y - X\beta\|_2^2$ so 
\begin{equation*}
\begin{aligned}
|\|y - X\beta_1\|_2 - \| y-X\beta_2\|_2| &\le \|(y-X\beta_1) - (y-X\beta_2)\|_2\\
&= \|X(\beta_1 - \beta_2)\|_2\\
&\le \|X\|_2\|(\beta_1 - \beta_2)\|_2\\
&= \lambda_{max}(X^TX)\|(\beta_1 - \beta_2)\|_2
\end{aligned}
\end{equation*}

which implies that $g$ is lipschitz- continuous with lipschitz constant $L = \lambda_{max}(X^TX)$ the largest eigenvalue of $X^TX$. \\

Assumption 2: Because $\|x\|_1$ is continuous for all $x \in \mathbb{R}^n$ and $b(z) = z^2$ is continuous for all $z \in \mathbb{R}$ their composition $\|x\|_1^2$ is continuous at all points in $\mathbb{R}^n$. To show that the penalty is convex we will consider the convexity of $ f(x) = \|x\|^2_1$.
For $t \in [0,1]$ and $x,z \in \mathbb{R}^n$

\begin{equation*}
\begin{aligned}
\|tx+ (1-t)z\|_1^2 &\le (t\|x\|_1 + (1-t)\|z\|)^2\\
& \le t\|x\|_1^2 + (1-t)\|z\|_1^2
\end{aligned}
\end{equation*}
Therefore $ f(x) = \|x\|^2_1$ is convex. The convexity of $P(\beta)$ follows from the fact that the sum of convex functions is also convex.

The penalty is proper by definition since for all $x \in \mathbb{R}^n$ we have $P(x) \ne \infty$ \\

Assumption 3: Using theorem 1.9 from \cite{rockafellar2009variational} we show existence of a solution.  We need the level sets  $X_{\alpha}= \{x: f(x) \le \alpha\}$ to be bounded for all $\alpha \in \mathbb{R}$. Consider a vector of the form $\betah_{\alpha} = (0,\dots,0, \sqrt{\frac{|\alpha| +1}{\lambda}} ,0, \dots,0)$. Our penalty evaluated at this vector gives $\lambda P(\beta_{\alpha} ) = |\alpha| + 1 > \alpha $. Since $\|y - X\beta\|\ge 0$ for all $\beta \in \mathbb{R}^n$ the objective function $f(\beta_{\alpha}) > \alpha$ . This implies that for all $x \in X_{\alpha} $ there exists an $M \in \mathbb{R}$ such that $\underset{i}{\max}|x_i| \le M$. Therefore the level sets are bounded. 

We have already shown that both $g$ and $h$ are continuous so their sum must also be continuous. Therefore because the level sets of our function $f$ are bounded, and $f$ is continuous and proper by theorem 1.9 there exists a minimum to our objective function $f$.

Assumption 4: This assumption holds by theorem 3.

Therefore by proposition 1 from \cite{schmidt2011convergence} the Exclusive Lasso algorithm converges at a rate of $O(1/k)$. 

\section*{Proof of theorem 5}
For a continuous and almost differentiable function $g$, Steins formula 

$$df (g) = \mathbb{E}[(\nabla * g)(y)]$$

defines the degrees of freedom for normal random variables in terms of the function $(\nabla * g)$. The function $(\nabla * g)$ known as the divergence is defined for $g : \mathbb{R}^n \rightarrow \mathbb{R}^n$ as

$$(\nabla * g)(y) = \underset{i=1}{\overset{n}{\sum}} \frac{\partial g_i}{\partial y_i} $$

To derive the degrees of freedom for the Exclusive Lasso problem we need to prove that the estimate is a continuous and almost differentiable function of $y$. Tibshirani provides a lemma stating that
\begin{lemma}
For a convex set $C \subset \mathbb{R}^n$ the projection map $P_C$ and the map $I- P_C$ are continuous and almost differentiable.
\end{lemma}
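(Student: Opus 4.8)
The plan is to prove the two asserted properties of $P_C$ and $I-P_C$ separately, using the obtuse-angle characterization of the Euclidean projection for continuity and Rademacher's theorem (or Alexandrov's) for almost differentiability. Throughout, $C$ is a nonempty closed convex subset of $\mathbb{R}^n$, so $P_C(x) = \mathrm{argmin}_{c\in C}\|c-x\|_2$ is well defined and single valued, characterized by $\langle x - P_C(x),\, c - P_C(x)\rangle \le 0$ for all $c \in C$.

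\textbf{Continuity.} First I would write the characterizing inequality for $P_C(x)$ with $c = P_C(y)$ and, symmetrically, for $P_C(y)$ with $c = P_C(x)$, and add them. This gives the firm nonexpansiveness bound
\[
\|P_C(x) - P_C(y)\|_2^2 \le \langle x - y,\ P_C(x) - P_C(y)\rangle ,
\]
so by Cauchy--Schwarz $\|P_C(x) - P_C(y)\|_2 \le \|x-y\|_2$; hence $P_C$ is $1$-Lipschitz and in particular continuous. For $I - P_C$, expanding the square and using the displayed inequality to control the cross term yields
\[
\|(I-P_C)(x) - (I-P_C)(y)\|_2^2 \le \|x-y\|_2^2 - \|P_C(x)-P_C(y)\|_2^2 \le \|x-y\|_2^2 ,
\]
so $I-P_C$ is also $1$-Lipschitz and continuous.

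\textbf{Almost differentiability.} Since $P_C$ and $I-P_C$ are Lipschitz maps $\mathbb{R}^n\to\mathbb{R}^n$, each of their coordinate functions is Lipschitz, hence absolutely continuous along almost every line parallel to a coordinate axis and differentiable at Lebesgue-almost every point by Rademacher's theorem. A Fubini argument then shows that each coordinate function is recovered from its a.e.-defined gradient by line integration for almost every base point, which is exactly the notion of almost differentiability required for Stein's formula (as used in \cite{stein1981estimation,tibshirani2012degrees}). Alternatively, and perhaps more cleanly, one can observe that $P_C = \nabla\phi$ where $\phi(x) = \sup_{c\in C}\{\langle x,c\rangle - \tfrac12\|c\|_2^2\}$ is convex (a supremum of affine functions), so $P_C$ is the gradient of a convex function and is therefore differentiable Lebesgue-a.e. by Alexandrov's theorem; the same line-integral argument gives almost differentiability of $P_C$, and hence of $I-P_C$.

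\textbf{Main obstacle.} The Lipschitz estimate is entirely routine. The only delicate point is making the passage from "$1$-Lipschitz'' to "almost differentiable in the precise sense of Stein'' rigorous, i.e.\ verifying that the a.e.\ gradient guaranteed by Rademacher (or Alexandrov) actually reconstructs the map along segments for a.e.\ starting point — this is where the absolute-continuity-on-lines characterization of Lipschitz/Sobolev functions does the real work. Since this is a standard fact, in the paper we simply invoke it, attributing the statement to the projection-map lemma of \cite{tibshirani2012degrees}.
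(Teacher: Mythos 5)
Your proof is correct. The paper itself does not prove this lemma at all --- it simply writes ``For proof see \citet{tibshirani2012degrees}'' --- so there is no in-paper argument to compare against; your derivation (firm nonexpansiveness of $P_C$ from the variational inequality $\langle x - P_C(x), c - P_C(x)\rangle \le 0$, hence $1$-Lipschitz continuity of both $P_C$ and $I-P_C$, and almost differentiability in Stein's sense from the Lipschitz property via Rademacher, or equivalently from $P_C=\nabla\phi$ with $\phi$ convex and Alexandrov's theorem) is exactly the standard argument underlying the cited result, and you correctly identify the one genuinely delicate step, namely upgrading a.e.\ differentiability to the line-integral representation Stein's identity requires. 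The only caveat worth recording is one you already handle implicitly: the lemma as stated needs $C$ nonempty and closed for $P_C$ to be single valued, which does hold for the set $C=\{u: P^*(X^Tu)\le\alpha\}$ to which the paper later applies it.
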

For proof see \cite{tibshirani2012degrees}.

\begin{lemma}  The estimate $X\betah = (I - P_C )y$ for the set

$$C=\{u \in \mathbb{R}^n :P^*(X^Tu) \le \alpha \} $$

where

$$P^*(\beta) = \sqrt{\underset{g \in \mathcal{G}}{\sum} \|\beta_g\|^2_{\infty}}$$

is the dual norm of the square root of our penalty and $\alpha$ is a constant.

\end{lemma}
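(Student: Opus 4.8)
\emph{Approach.} The plan is to identify $X\betah$ with a projection through the Lagrangian dual of the Exclusive Lasso, mirroring the argument \citet{tibshirani2012degrees} uses for the lasso. Two preliminary observations frame the approach. First, the fitted value $X\betah$ is unique --- it is the projection of $y$ onto the closed convex image $\{X\beta : P(\beta)\le K\}$ --- so no rank condition on $X$ is required. Second, $R(\beta):=\big(\sum_{g\in\mathcal G}\|\beta_g\|_1^2\big)^{1/2}$ is a genuine norm (the square root of $2P$), and by the standard equivalence between the constrained form (1), its Lagrangian (2), and the ``linearized'' problem $\min_\beta \tfrac12\|y-X\beta\|_2^2+\alpha R(\beta)$, the Exclusive Lasso at any fixed level of regularization produces the same fit as this linearized problem for a suitable constant $\alpha\ge 0$ (the multiplier of the ball constraint $R(\beta)\le t$). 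I would run the argument on this linearized form.

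The first technical step is to confirm that $P^\ast$ is the dual norm of $R$: this is the familiar fact that a ``norm of group norms'' dualizes group by group, $R^\ast(z)=\sup_{R(\beta)\le 1}\sum_{g}z_g^\top\beta_g=\sup_{a\ge 0,\ \sum_g a_g^2\le 1}\sum_g a_g\|z_g\|_\infty=\big(\sum_g\|z_g\|_\infty^2\big)^{1/2}=P^\ast(z)$, where the inner supremum uses $\sup_{\|\beta_g\|_1\le a_g}z_g^\top\beta_g=a_g\|z_g\|_\infty$ and the outer one is Cauchy--Schwarz.

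The second step is to form the Lagrangian dual. Writing $z=X\beta$ and attaching a multiplier $\nu$ to the affine constraint $z=X\beta$, I would minimize the Lagrangian over $z$ (an explicit least-squares step giving $z=y+\nu$) and over $\beta$ (which contributes $0$ when $P^\ast(X^\top\nu)\le\alpha$ and $-\infty$ otherwise, since $\min_\beta[\alpha R(\beta)+(X^\top\nu)^\top\beta]$ behaves like the indicator of that set). Completing the square, the dual reduces to $\min_{\nu}\tfrac12\|y+\nu\|_2^2$ subject to $P^\ast(X^\top\nu)\le\alpha$, whose solution is $\nu^\star=-P_C(y)$ by the symmetry of $C=\{u:P^\ast(X^\top u)\le\alpha\}$. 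Because the only constraint coupling $\beta$ and $z$ is affine, strong duality holds with no gap, and the stationarity relation $z^\star=y+\nu^\star$ then yields $X\betah=z^\star=y-P_C(y)=(I-P_C)y$, as claimed (equivalently, the residual $y-X\betah$ equals $P_C(y)$).

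I expect the main obstacle to be the bookkeeping in this last step: justifying strong duality and the exact stationarity identity equating the fit with $y+\nu^\star$, and pinning down that the constant $\alpha$ appearing in $C$ is the multiplier associated with the Exclusive Lasso's regularization level through the constrained/Lagrangian/linearized equivalence. The dual-norm computation and the least-squares minimization are routine, and the subsequent application of the preceding lemma of \citet{tibshirani2012degrees} to the convex set $C$ --- to obtain continuity and almost differentiability of $X\betah$ in $y$, and hence Stein's formula for the degrees of freedom --- is immediate once the representation $X\betah=(I-P_C)y$ is established.
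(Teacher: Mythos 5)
Your proposal is correct, and it reaches the representation $X\betah=(I-P_C)y$ by a genuinely different route than the paper. The paper's proof is a direct primal verification: it takes $\theta=y-X\betah$ as a candidate for $P_C(y)$ and checks the variational inequality $\langle y-\theta,\theta-u\rangle\ge 0$ for all $u\in C$, using two ingredients --- the KKT identity $(X\betah)^T(y-X\betah)=\tfrac{\alpha}{2}\sqrt{P(\betah)}$ (obtained by pairing the stationarity condition with $\betah$, after matching the multiplier of $P$ with that of $\sqrt{P}$ via $\alpha=2\lambda\sqrt{P(\betah)}$) and the dual-norm bound $\langle X^Tu,\betah\rangle\le P^*(X^Tu)\sqrt{P(\betah)}$ on $C$. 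You instead derive the identity constructively from the Lagrangian dual of the linearized problem: splitting $z=X\beta$, eliminating $z$ and $\beta$, and recognizing the dual as the projection $\min_{\nu}\tfrac12\|y+\nu\|_2^2$ over $C$, with strong duality (affine coupling constraint only) and stationarity giving $X\betah=y-P_C(y)$. The two arguments rest on the same facts --- the group-by-group dual-norm computation (which you carry out more explicitly and cleanly than the paper does) and the data-dependent matching of the constant $\alpha$ to the regularization level, which both you and the paper must address. What your route buys is that the projection structure emerges from the duality rather than having to be guessed and verified, and it is closer to the original Tibshirani--Taylor argument; what the paper's route buys is that it avoids setting up the dual problem and strong duality machinery, at the cost of needing the answer in advance. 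Your subsequent application of the continuity/almost-differentiability lemma for $I-P_C$ is exactly how the paper proceeds.
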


\begin{proof}

The dual norm of a norm $\|z\|$ is defined as the norm $\|x\|^*$ such that $\|z\| = \text{sup}\{\langle x,z \rangle : \|x\|^* \le 1\}$. Note that for the square root of our penalty 

$$\sqrt{P(\betah)} = \left \langle \frac{sign(\betah)\|\betah_{g_i}\|_1}{\sqrt{\sum_g \|\betah\|_1^2}}, \betah \right \rangle$$

This means that our dual norm is the norm such that $P^*( \frac{sign(\betah)\|\betah_{g_i}\|_1}{\sqrt{\sum_g \|\betah\|_1^2}}) \le 1$ which holds for the norm 

$$P^*(\beta) = \sqrt{ \underset{g \in \mathcal{G}}{\sum} \|\beta_g\|_{\infty}^2}$$

We show that $\theta = y-X\betah$ is equal to the projection of $y$ onto the set $C$. 
The projection $\theta = P_C(y)$ can be characterized as a point $\theta$ satisfying the first order optimality conditions for the constrained optimization problem $\underset{\theta \in C}{\min}\|y-\theta\|_2^2$. The first order optimality conditions are 

\begin{equation*}
\begin{aligned}
f'(\theta;d) &\ge 0\\
\langle y - \theta, \theta - u \rangle &\ge 0
\end{aligned}
\end{equation*}

for all $u \in C$

We must verify that $f'(\theta;d) \ge 0$.  If we let $\theta = y - X \betah(y)$ then

\begin{align}
\langle y - \theta, \theta - u \rangle  &= \langle X\betah, y - X\betah - u \rangle \\
&= \langle X\betah,y-X\betah\rangle - \langle X^T u,\betah \rangle \\
&= \frac{\alpha}{2} \sqrt{P(\betah)} - \langle X^Tu, \betah \rangle \\
&=   \underset{P^*(w) \le \frac{\alpha}{2}}{\text{max}}\langle w, \betah \rangle - \langle X^Tu, \betah \rangle \\
&\ge 0
\end{align}

Line 3 follows from the fact that there exists a regularization parameter such that the necessary conditions for the Exclusive Lasso problem are exactly the same as the necessary conditions for the optimization problem that uses the square root of the Exclusive Lasso penalty. Notice that if we let $\alpha = 2\lambda P(\betah)^{\frac{1}{2}}$ then $\lambda\partial  P(\betah) = \alpha\partial  \sqrt{P(\betah)}$. This implies that $\betah$  necessarily satisfies 

$$-X^T(y-X\betah) +  \alpha \partial \sqrt{P(\betah)} = 0$$

Taking the inner product with $\betah$ yields 

$$(X\betah)^T(y-X\betah) = \frac{\alpha}{2}  \sqrt{P(\betah)}$$

Line 5 follows for the set $C = \{ u \in \mathbb{R}^n : P^*(X^Tu) \le \frac{\alpha}{2} \}$ proving that $y-X\betah$ is equal to the projection of $y$ onto the set $C$. This implies that $X\betah = (I - P_C)y$

\end{proof}

Combining Lemmas 1 and 2 yields that the exclusive lasso estimate is continuous and almost differentiable. Next we define $\betah$ in terms of the support set $S$. First recall the KKT conditions 

$$-X^T(y - X\betah) + \lambda z = 0$$

where

$$z_i =  \left\{
  \begin{array}{lr}
    sign(\betah_i)\|\betah_g\|_1  \    &: \betah_i \ne 0,i \in g \\
    \left[-\|\betah_g\|_1,\|\betah_g\|_1\right] &: \betah_i = 0
  \end{array}
\right.
$$

Note that we can rewrite the sub gradient for the indices $i \in g \cap S$. If we let $s_{g \cap S} = sign(\betah_{g \cap S})$

$$z_{g \cap S} = s_{g \cap S}s_{g \cap S}^T \betah_{g \cap S}$$

We can write the sub gradient over the indices of the support as

$$z_S = M_S\betah_S$$

where $M_S$ is a block diagonal matrix with the matrices $\{s_{g \cap S}s_{g \cap S}^T: g\in \mathcal{G}\}$ on the diagonal. 

We can rewrite the KKT conditions with respect to the support set

$$
-\begin{bmatrix}
X_S^T \\
X^T_{S^c}
\end{bmatrix}
\left (y - \begin{bmatrix} X_S \ X_{S^c}\end{bmatrix} \betah\right) +\lambda \begin{pmatrix}z_S \\ z_{S^c} \end{pmatrix} = 0
$$

This is equal to 

\begin{equation*}
\begin{aligned}
-X_S^Ty + X^T_SX_S\betah_S + \lambda z_S &= 0\\
-X_{S^c}^Ty + X_{S^c}^TX_S\betah_S + \lambda z_{S^c} &= 0
\end{aligned}
\end{equation*}

We then solve for $\betah_S$ using $z_S = M_S\betah_S$ yielding

$$\betah_S = (X^T_SX_S + \lambda M_S)^{\dagger}X_S^Ty $$

Note that we are relying on the fact that we have already proved the existence of a solution to the optimization problem in the proof for theorem 4. This gives us an estimate $\hat{y} = X_S(X^T_SX_S + \lambda M_S)^{\dagger}X_S^Ty $. The divergence is therefore $$(\nabla * X\betah)(y) =trace[X_S(X^T_SX_S + \lambda M_S)^{\dagger}X_S^T]$$ which is equal to the sum of the eigenvalues.

\section*{Penalty}

For specific values of $X$ and $y$ the Exclusive Lasso will select more than one variable per group for all values of the regularization parameter $\lambda$. This means that although the Exclusive Lasso is designed to select exactly one element per group we cannot guarantee the Exclusive Lasso will enforce the correct structure. Consider an example. Suppose we characterize the Exclusive Lasso estimate using the equicorrilation set. Recall the equicorrilation set 
$$\E = \left\{i: \frac{|X_i^T(y-X\betah)|}{\|\betah_g\|_1} = \lambda\right\}$$

If we let $s$ be a vector such that $s_i = sign(\betah_i)$ for $i \in \E$ and $\gamma$ be a vector such that $\gamma_i = \|\betah_{g_i}\|_1$ where $g_i$ is the group for an index $i \in \E$. Let $\bar{\gamma}$ be a vector such that $\bar{\gamma}_i = \|\betah_{g_i}\|_1 - |\betah_i|$ then we can solve for $\betah$. 

\begin{equation*}
\begin{aligned}
X_{\E}^T(y - X_{\E}\betah_{\E}) &= \lambda \gamma s\\
& =\lambda \bar{\gamma}s  +\lambda  \betah_{\E}\\
\betah_{\E} &= (X_{\E}^TX_{\E} + \lambda I )^{-1}[X^T_{\E}y-  \lambda \bar{\gamma} s]
\end{aligned}
\end{equation*}

 Let $X  = I_2$ and we let $y^T = (1,1)$ then because $X$ is orthonormal the estimate simplifies to 

$$\betah_{\E} = \frac{1}{1 + \lambda} y - \frac{\lambda}{1 + \lambda}\gamma' s$$

In this case $\betah_1 = \betah_2$ so the term $\frac{\lambda}{1 + \lambda}\gamma' s$ is going to shrink both indices equally for all $\lambda$. This prevents the estimate from selecting exactly one element in each group. 

We conjecture that conditions on $X$ and $y$ for this to occur can be formalized, but this is beyond the scope of this work. Intuitively, this behavior occurs when two or more variables get shrunken equally. As such, this behavior is relatively rare in practice. If it does occur and one variable per group is desired, we propose to use BIC to select $\lambda$ and apply group-wise t

\section{Acknowledgements}

This material is based upon work supported by the National Science Foundation Graduate Research Fellowship Program under Grant No.0940902. GA acknowledges support from NSF DMS-1264058 and DMS-1209017. The authors would like to thank Dr. Zhandong Liu for helpful conversations while preparing the manuscript and for his help acquiring the NMR spectroscopy data. Any opinions, findings, and conclusions or recommendations expressed in this material are those of the authors and do not necessarily reflect the views of the National Science Foundation.

\bibliographystyle{Chicago}

\bibliography{example}

\end{document}